\title[Volume comparison and singularity theorems]
{Volume comparison for hypersurfaces in Lorentzian manifolds and singularity theorems}
\numberwithin{equation}{section}
\theoremstyle{plain}
\newtheorem{theorem}{Theorem}[section]
\newtheorem{corollary}[theorem]{Corollary}
\newtheorem{proposition}[theorem]{Proposition}
\newtheorem{lemma}[theorem]{Lemma}
\theoremstyle{definition}
\newtheorem{definition}[theorem]{Definition}
\theoremstyle{remark}
\newtheorem{remark}[theorem]{Remark}
\newtheorem{remarks}[theorem]{Remarks}
\newcommand\bel[1]{\begin{equation}\label{#1}}
\newcommand\ee{\end{equation}}
\newcommand{\gammad}{\ensuremath{\dot{\gamma}}}
\newcommand{\g}{\ensuremath{\mathbf{g}}}
\newcommand{\n}{\ensuremath{\mathbf{n}}}
\newcommand{\h}{\ensuremath{\mathbf{h}}}
\newcommand{\Cinfty}{\ensuremath{C^{\infty}}}
\newcommand{\R}{\ensuremath{\mathbb{R}}}
\newcommand{\D}{\operatorname{\mathrm{d}}\!}
\renewcommand\H{\ensuremath{H}}
\newcommand\vs{\vskip .2cm}
\newcommand{\At}[1] {\Big|_{#1}}
\newcommand{\id} {\operatorname{\mathsf{id}}}
\newcommand{\image} {\operatorname{{\mathrm{im}}}}
\newcommand{\SP}[1] {\left\langle{#1}\right\rangle}
\DeclareMathSymbol\dAlembert {\mathop}{AMSa}{"03}
\newcommand{\Sec}[1][{k}] {\Gamma^{#1}}
\newcommand{\Secinfty} {\Sec[\infty]}
\newcommand{\End} {\operatorname{\mathsf{End}}}
\newcommand{\norm}[1] {\left|{#1}\right|}
\newcommand{\Ric}{\ensuremath{\mathbf{Ric}}}
\newcommand{\Rm}{\ensuremath{\mathbf{R}}}
\newcommand{\E}{\mathrm{e}}
\newcommand{\CCC}{\ensuremath{\mathrm{CCC}}}
\newcommand{\Ical}{\ensuremath{\mathcal{I}}}
\newcommand{\Jcal}{\ensuremath{\mathcal{J}}}
\newcommand{\Scal}{\ensuremath{\mathcal{S}}}
\DeclareMathOperator{\Cut}{Cut}
\DeclareMathOperator{\supp}{supp}
\DeclareMathOperator{\gradient}{grad}
\DeclareMathOperator{\vol}{vol}
\DeclareMathOperator{\area}{area}
\DeclareMathOperator{\Hess}{Hess}
\DeclareMathOperator{\tr}{tr}
\begin{document}
\author{Jan-Hendrik\ Treude}
\address{Fakult{\"a}t f{\"u}r Mathematik \\ Universit{\"a}t Regensburg \\ D-93040 Regensburg\\ Germany}
\email{\href{mailto:jan-hendrik.treude@mathematik.uni-regensburg.de}{jan-hendrik.treude@mathematik.uni-regensburg.de}}
\urladdr{\href{http://homepages.uni-regensburg.de/~trj19484}{http://homepages.uni-regensburg.de/\lower3pt\hbox{\symbol{'176}}trj19484}}
\author{James~D.E.\ Grant}
\address{Gravitationsphysik \\ Fakult{\"a}t f{\"u}r Physik \\ Universit{\"a}t Wien \\ Boltzmanngasse 5 \\ 1090 Wien \\ Austria}
\email{\href{mailto:james.grant@univie.ac.at}{james.grant@univie.ac.at}}
\urladdr{\href{http://jdegrant.wordpress.com}{http://jdegrant.wordpress.com}}
\subjclass[2010]{53C23, 53C80.} \keywords{Lorentzian geometry,
comparison theorems, singularity theorems}
\date{January 19, 2012. Preprint UWThPh-2012-2} \thanks{The authors are grateful to Prof.~R.\
  Steinbauer for helpful conversations. JHT would like to thank Stefan
  Suhr and Olaf M{\"u}ller for clarifying discussions. The work of JDEG was initially supported by
  START-project Y237--N13 of the~\href{http://www.fwf.ac.at/}{Austrian Science Fund}. A preliminary
  version of this paper was prepared while JDEG was visiting
  the~\href{http://www.esi.ac.at}{Erwin Schr\"{o}dinger Institute} as part of the programme \lq\lq
  Dynamics of General Relativity\rq\rq.}
\begin{abstract}
We develop area and volume comparison theorems for the
evolution of spacelike, acausal, causally complete
hypersurfaces in Lorentzian manifolds, where one has a lower
bound on the Ricci tensor along timelike curves, and an upper
bound on the mean curvature of the hypersurface. Using these
results, we give a new proof of Hawking's singularity theorem.
\end{abstract}
\maketitle
\thispagestyle{empty}

%========================================================================================================

\section{Introduction}

There are many similarities between the ideas inherent in the
proof of the singularity theorems in Lorentzian geometry, and
those underlying the proofs of certain Riemannian comparison
theorems. For example, the interplay between Riccati techniques
and index techniques in both fields has been emphasised by
Ehrlich~\cite{EhrlichSlides}. The combination of completeness
(which guarantees minimizing geodesics) and curvature
conditions (which, via Riccati techniques, imply that geodesics
have conjugate points) are the key ingredients in the proof of,
for example, Myers's theorem in Riemannian geometry. It is,
similarly, the interplay between global hyperbolicity and
conjugate points which leads to the singularity theorems in
Lorentzian geometry.

We have two objectives in this paper. Firstly, we derive
comparison results concerning the area and volume of sets that
evolve from a fixed spacelike hypersurface in a Lorentzian
manifold. In particular, we prove area and volume monotonicity
theorems concerning such quantities, where compared with fixed
Lorentzian warped product manifolds. Our techniques are based
on ideas from Riemannian geometry (see, e.g.,~\cite{HeintzeKarcher}). Our second aim is to apply these
results to give a new proof of Hawking's singularity
theorem~\cite[pp.~272]{HE}. The idea of the proof is that
geometrical conditions required for Hawking's result (i.e. that
the Ricci tensor be non-negative on timelike vector fields and
the initial surface have negative mean curvature) are
sufficient to ensure that the volume of the future evolution of
the spacelike hypersurface is finite. Combining this property
with causal structure arguments then gives the result. Our
philosophy here is somewhat similar to that recently employed
in the metric measure spaces where Myers's theorem is deduced
from a generalised version of the Brunn--Minkowski
inequality~\cite{Sturm2, Lott.Villani:2009a}. In particular,
our approach was motivated by the wish to find a method of
proof of the singularity theorems that may be generalized to
the low-regularity Lorentzian setting.

\smallskip

The plan of the paper is as follows. After recalling necessary
background material in Section~\ref{sec:background}, we develop the
comparison results for Riccati equations that we require in
Section~\ref{sec:Riccati}. In Section~\ref{sec:RicciComparison}, we
apply these results to derive area and volume monotonicity results
for spacelike hypersurfaces in Lorentzian manifolds that satisfy what
we call the \emph{cosmological comparison condition}, $\CCC(\kappa,
\beta)$. (See Definition~\ref{def:cosmological-curvature-condition}.)
In particular, we introduce comparison geometries in which our area
and volume conditions are sharp, and show that geometries satisfying
the $\CCC(\kappa, \beta)$ condition satisfy monotonicity properties
relative to these model geometries. In
Section~\ref{sec:singularity-theorems}, we show how these geometrical
comparison theorems may be applied to give new proofs of the Hawking
singularity theorem~\cite[pp.~272]{HE}. Since one of our proofs is
based upon geometrical comparison arguments for volumes, areas, etc,
it seems plausible that it can be adapted to the low-regularity
regime, and, in particular, may be the basis for a proof of the
singularity theorems for metrics that are not $C^2$.%
\footnote{This possibility is investigated in~\cite{SynthSing}.}
After some final remarks, and an outline of some possible directions
for further research, in Appendix~\ref{sec:bddabove}, we investigate
some conditions under which we may prove a \emph{lower\/} bound on
the time separation between $\Sigma$ and focal points along normal
geodesics. This is essentially an adaption to the Lorentzian regime
of the Rauch comparison theorem for submanifolds in Riemannian
geometry given by Warner~\cite{Warner:Rauch}. It leads to comparison
theorems where areas and volumes of sets are bounded below in terms
of those of corresponding sets in a model geometry.

\subsection*{Notation}
Throughout, $(M, \g)$ will denote a connected $(n+1)$-dimensional
Lorentzian manifold. We will generally use the standard notation
of~\cite{BEE, HE, ONeill}. In particular, we adopt the convention
that the metric is of signature $(-, +, \ldots, +)$. We will also
denote the product $\g(u, v)$ by $\SP{u, v}$. The curvature tensor of
the metric is defined with the convention $\Rm(X, Y) Z = \left(
    [\nabla_X, \nabla_Y] - \nabla_{[X, Y]} \right) Z$, and we denote
the Ricci tensor of $\g$ by $\Ric$.

\section{Background material}
\label{sec:background}

\subsection{Causality theory}
We first review the concepts that we require from the causal
structure theory of Lorentzian manifolds. For extensive, modern
reviews of this material, see~\cite{piotr, Minguzzi.Sanchez}.

Let $(M, \g)$ be a connected $(n+1)$-dimensional Lorentzian
manifold. Let $p \in M$. A non-zero tangent vector, $v \in T_p
M$, is said to be \emph{timelike}, \emph{null}, or
\emph{spacelike\/} if $\SP{v, v} < 0$, $\SP{v, v} = 0$, or
$\SP{v, v} > 0$, respectively. A vector that is either timelike
or null is called \emph{causal}. These notions naturally extend
to vector fields.

For each $p \in M$, the set of causal vectors in $T_p M$ has
two connected components, the two \emph{causal cones}. A
\emph{time-orientation for $T_p M$\/} is the specification of
one of the two causal cones as the \emph{future causal cone},
and the other one as the \emph{past causal cone}. A
\emph{time-orientation for $M$\/} is a continuous choice of
time-orientation in all tangent spaces. A Lorentzian manifold
is either time-orientable, or it admits a double-cover that is
time-orientable. Therefore, without any important loss of
generality, we will assume throughout that our Lorentzian
manifolds are time-oriented. We will refer to a Lorentzian
manifold with time-orientation as a~\emph{spacetime}.

A piecewise smooth curve%
\footnote{By a curve, we will mean a continuous map $\gamma
\colon I \to M$, where $I \subseteq \R$ is an interval.} in $M$
is called~\emph{future-directed timelike\/} if its tangent
vector is timelike and lies in the future causal cone at all
points. Analogously, we define piecewise smooth past-directed
timelike curves and piecewise smooth future- and past-directed
causal curves. Given $p, q \in M$, we write $p \ll q$ if there
exists a piecewise smooth future-directed timelike curve from
$p$ to $q$. Similarly, we write $p < q$ if there exists a
piecewise smooth future-directed causal curve from $p$ to $q$.
Finally, we write $p \leq q$ if either $p < q$ or $p = q$. Let
$A \subseteq M$ be an arbitrary subset of $M$. We define the
\emph{chronological and causal future of $A$} to be the sets
\begin{align*}
  I^+(A) &:= \{ q \in M \mid \exists\, p \in A: p \ll q \} \, ,
  \\
  J^+(A) &:= \{ q \in M \mid \exists\, p \in A: p \leq q \} \, ,
\end{align*}
respectively. Analogously, we define the chronological and causal
past $I^-(A)$ and $J^-(A)$ of $A$.

Finally, we demand that $M$ be \emph{globally hyperbolic}, i.e. we
impose that the following two conditions hold (cf., e.g., ~\cite{BernalSanchez}):
\begin{compactenum}
\item $M$ is causal, i.e. we have $p \not< p$ for all $p \in M$.
\item For all $p, q \in M$, the causal diamond $J(p, q) := J^+(p) \cap J^-(q) \subset M$ is compact.
\end{compactenum}

\subsection{Time-separation and maximizing curves}

Let $\gamma \colon [a, b] \to M$ be a piecewise smooth curve. The
\emph{Lorentzian arc-length\/} of $\gamma$ is defined to be
\[
L(\gamma) = \sum_{i=1}^k \int_{t_{i-1}}^{t_i} \norm{\dot{\gamma}(t)} d t \,,
\]
where $a = t_0 < \ldots < t_k = b$ are the breakpoints of $\gamma$
(i.e. points at which the tangent vector of $\gamma$ is not
continuous), and $\norm{\dot{\gamma}(t)} := \sqrt{|\SP{\dot{\gamma}(t), \dot{\gamma}(t)}|}$.

\begin{definition}
 The \emph{time-separation\/} $\tau \colon M \times M \to [0, \infty]$ is defined by
 \bel{eq:time-separation}
  \tau(p, q) :=
  \sup \left\{ L(\gamma) \left|
    \begin{aligned}
     &\gamma \mbox{ is a piecewise
      smooth future- }
     \\
     &\mbox{directed causal curve from $p$ to
      $q$}
    \end{aligned}
   \right.
  \right\}
 \ee
 if $p < q$, and by $\tau(p, q) = 0$ if $p \not< q$. If the supremum
 in~\eqref{eq:time-separation} is attained by a piecewise smooth
 future-directed causal curve, $\gamma$, from $p$ to $q$, then
 $\gamma$ is said to be \emph{maximizing\/} between $p$ and $q$.
\end{definition}

The time-separation may be considered as a Lorentzian analogue of the
distance function in Riemannian geometry, although its properties
differ in several important respects (see, e.g.,~\cite[Chap.~4]{BEE}).
An important global question concerns the
existence of maximizing curves. For globally hyperbolic spacetimes, we
have the following well-known result, which illustrates why global
hyperbolicity of a Lorentzian manifold may be compared to completeness
of a Riemannian manifold from the point of view of arc-length.
\begin{theorem}
 \label{maxgeodspts}
 Let $(M, \g)$ be globally hyperbolic.
 \begin{compactenum}
 \item For all $p, q \in M$ with $p < q$, there exists a maximizing
  curve $\gamma \colon [a, b] \to M$ from $p$ to $q$. If $p \ll q$,
  then $\gamma$ may be reparametrized to be a timelike
  geodesic. Otherwise, $\gamma$ can be reparametrized to be a null
  geodesic. In both cases, the corresponding geodesic has no
  conjugate points prior to $q$.
\item The time-separation of $M$ is finite-valued and continuous.
 \end{compactenum}
\end{theorem}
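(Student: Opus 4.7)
The plan for part (1) is to combine a limit-curve argument with the upper semicontinuity of the Lorentzian arc-length functional. Fix $p, q \in M$ with $p < q$ and choose a maximizing sequence of piecewise smooth future-directed causal curves $\gamma_n$ from $p$ to $q$, with $L(\gamma_n) \to \tau(p, q)$. Pick an auxiliary complete Riemannian metric $h$ on $M$ and reparametrize each $\gamma_n$ by $h$-arc-length on an interval $[0, T_n]$. Since every $\gamma_n$ lies in the compact causal diamond $J(p, q)$, the $T_n$ are uniformly bounded. An equicontinuity and Arzela--Ascoli argument then produces, after passing to a subsequence, a $1$-Lipschitz uniform limit $\gamma \colon [0, T] \to M$ from $p$ to $q$. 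The causal character is preserved under $C^0$ limits since the future causal cones form a closed subset of $TM$. Upper semicontinuity of $L$ along uniform limits of causal curves of bounded $h$-length (essentially because $v \mapsto \sqrt{-\langle v, v \rangle}$ is concave on the closed future cone) then yields $L(\gamma) \geq \limsup_n L(\gamma_n) = \tau(p, q)$, so $\gamma$ is maximizing.

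It remains to show that, up to reparametrization, a maximizing curve is a timelike or null geodesic without a conjugate point prior to $q$. Any corner of $\gamma$ can be smoothed to strictly increase $L$ via the first variation formula, so $\gamma$ is a smooth geodesic up to parametrization. If $p \ll q$ then $\tau(p, q) > 0$, and $\gamma$ cannot contain any null segment: near any null point the timelike cone is locally accessible, and the concavity of $v \mapsto \sqrt{-\langle v, v \rangle}$ on the future cone ensures that a small timelike deformation strictly gains length, so $\gamma$ must be entirely timelike. If $p < q$ but $p \not\ll q$, then $\tau(p, q) = 0$ and any maximizer is null throughout; it must still be a null geodesic, since otherwise a local variation would produce a timelike curve, contradicting $p \not\ll q$. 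Absence of conjugate points prior to $q$ is then a consequence of the second variation formula and standard Jacobi field theory: past a conjugate point one can construct a variation that strictly increases the Lorentzian length, contradicting maximality.

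For part (2), finiteness follows because any maximizing curve lies in the compact set $J(p, q)$, where the Lorentzian norm of causal vectors is uniformly controlled by their $h$-norm, so $L$ is bounded by a constant times the uniformly bounded $h$-length. Lower semicontinuity of $\tau$ holds on any spacetime: given a timelike curve from $p$ to $q$ one can slightly extend it to a timelike curve between any nearby endpoints of almost the same length. For upper semicontinuity, apply the limit-curve construction of (1): given $p_n \to p$ and $q_n \to q$, select maximizers $\gamma_n$ from $p_n$ to $q_n$, observe that they eventually lie in a common compact set (using the behaviour of $J$ under small perturbations in a globally hyperbolic spacetime), and extract a causal limit curve from $p$ to $q$ whose length is at least $\limsup_n \tau(p_n, q_n)$.

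The main technical obstacle is the pair of foundational facts underlying both parts: the compactness, in the $C^0$ topology, of the space of causal curves joining two compact sets, and the upper semicontinuity of $L$ along $C^0$ limits of such curves. Both rely essentially on global hyperbolicity (the former through compactness of causal diamonds, the latter through uniform control of causal cones on compact sets) and would fail in merely causal spacetimes, which is consistent with the known failure of the conclusion in that setting.
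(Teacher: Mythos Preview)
The paper does not actually prove this theorem: it is stated as a well-known result without proof, serving as background material (the paper refers the reader to standard references such as~\cite{BEE,ONeill} for causality theory). So there is no ``paper's own proof'' to compare against.

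Your sketch follows the standard textbook route (limit-curve compactness in globally hyperbolic spacetimes, upper semicontinuity of $L$, first and second variation to identify maximizers as conjugate-point-free geodesics), and the overall strategy is correct. A couple of minor comments: once you have established that the maximizer is a smooth geodesic, the dichotomy between timelike and null is immediate from the fact that $\SP{\gammad,\gammad}$ is constant along a geodesic, so your separate variational argument about ``null segments'' is unnecessary; if $\tau(p,q)>0$ the geodesic has positive length and hence is timelike, otherwise it is null. Also, the limit curve produced by the Arzel\`a--Ascoli step is a priori only Lipschitz, so strictly speaking one must either extend the definition of $L$ to such curves or invoke the fact that in a strongly causal spacetime a continuous causal curve is locally a uniform limit of piecewise smooth causal curves; this is routine but worth flagging if you intend this as a complete proof rather than a sketch.
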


We shall need a slight variant of the time-separation. Recall that a
subset $A \subseteq M$ is said to be \emph{acausal\/} if $p \not< q$
for all $p, q \in A$. In particular, if $A$ is acausal then $I^+(A)
\cap I^-(A) = \emptyset$, so we may introduce the \emph{signed
  time-separation to $A$,\/} $\tau_A \colon M \to \overline{\R}$, by
\[
\tau_A(q)
:=
\begin{cases}
    \sup_{p \in A} \tau(p, q) & q \in I^+(A) \\
    - \sup_{p \in A} \tau(q, p) & q \in I^-(A) \\
    0 & \mbox{else}
\end{cases} \,.
\]

In order for these suprema to be attained, global hyperbolicity alone
is not sufficient. Rather, one must demand additional compactness
properties for $A$. The following concept, introduced by
Galloway~\cite{Galloway}, is well-suited to this purpose.

\begin{definition}
    A subset $A \subseteq M$ is~\emph{future causally complete (FCC)},
    if for each $q \in J^+(A)$ the intersection $J^-(q) \cap A
    \subseteq A$ has compact closure in $A$. Similarly, one defines
    \emph{past causal completeness (PCC)}. A subset that is both FCC
    and PCC is called~\emph{causally complete}.
\end{definition}

\begin{remark}
    Clearly every compact set is causally complete. More
    interestingly, every acausal Cauchy hypersurface is causally
    complete (cf.~\cite[Lemma~14.40]{ONeill}). Conversely, one can
    show that every causally complete, acausal (topological)
    hypersurface in a globally hyperbolic spacetime is actually a
    Cauchy hypersurface. This fails if causal completeness is weakened
    to either only FCC or PCC, and counterexamples are provided by
    (spacelike) hyperboloids in Minkowski spacetime. Finally, we
    mention that in globally hyperbolic spacetimes, the notion of
    causal completeness coincides with the notion of \emph{causal
      compactness\/} as defined in~\cite[Def.~5.1.1]{Friedlander}.
    Without global hyperbolicity, causal completeness is a weaker
    condition than causal compactness.
\end{remark}

\begin{theorem}
  Let $(M, \g)$ be globally hyperbolic and $\Sigma \subset M$ a
  smooth, spacelike, acausal, FCC hypersurface. Then:
  \begin{compactenum}
  \item For each $q \in J^+(\Sigma)$, there exists a point $p \in
   \Sigma$ with $\tau_\Sigma(q) = \tau(p, q)$. Furthermore, the
   maximizing geodesic from $p$ to $q$ is timelike, normal to
   $\Sigma$ and has no focal points before $q$.
\item The signed time-separation of $\Sigma$ is finite-valued and
    continuous on $J^+(\Sigma)$.
  \end{compactenum}
  An analogous result holds for $\Sigma$ being PCC and $q \in
  J^-(\Sigma)$.
\end{theorem}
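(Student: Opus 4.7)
The plan is to prove the two parts in sequence, with part (ii) following from part (i) and continuity of $\tau$ via a standard semicontinuity argument. The central ideas are: FCC confines candidate foot-points to a compact subset of $\Sigma$, continuity of $\tau$ then supplies both finiteness and existence of maximizers, and the variational calculus for timelike curves upgrades such maximizers to normal geodesics without focal points.

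For part (i), fix $q \in J^+(\Sigma)$. I would first confine candidate foot-points to a compact set: choose any $r \in I^+(q)$ (nonempty by global hyperbolicity); then $r \in J^+(\Sigma)$, and FCC makes $K := \overline{J^-(r) \cap \Sigma}$ compact in $\Sigma$. Any $p \in \Sigma$ with $p \leq q$ satisfies $p \ll r$, so every $p \in \Sigma$ with $\tau(p,q) > 0$ lies in $K$. Continuity of $\tau$ on $M\times M$ from Theorem~\ref{maxgeodspts} then yields $\tau_\Sigma(q) = \sup_{p \in K} \tau(p,q) < \infty$; taking a maximizing sequence $p_n \in K$, extracting a convergent subsequence $p_n \to p \in \Sigma$, and invoking continuity of $\tau$ once more gives $\tau(p,q) = \tau_\Sigma(q)$. (If $q \notin I^+(\Sigma)$, the sup is $0$ and any $p \in \Sigma \cap J^-(q)$ realises it trivially.)

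Assuming now $\tau_\Sigma(q) > 0$, Theorem~\ref{maxgeodspts} supplies a maximizing timelike geodesic $\gamma\colon [0,T] \to M$ from $p$ to $q$ without conjugate points. Normality follows from a first-variation argument: given $V \in T_p \Sigma$, take a variation $\gamma_s$ with $\gamma_s(T) = q$ fixed and $s \mapsto \gamma_s(0) \in \Sigma$ having velocity $V$ at $s=0$; maximality $L(\gamma_s) \leq L(\gamma_0)$ forces $\partial_s L(\gamma_s)|_{s=0} = 0$, and the standard first-variation formula for a timelike geodesic reduces this to $\langle \dot\gamma(0), V\rangle = 0$. Varying $V$ over $T_p\Sigma$ yields $\dot\gamma(0) \perp T_p\Sigma$. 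Absence of focal points prior to $q$ is then by contradiction: a focal point of $\Sigma$ along $\gamma$ at some $t_0 \in (0,T)$ would, by the standard submanifold Morse-index construction, produce a piecewise smooth variation vector field $W$ along $\gamma$, tangent to $\Sigma$ at $t=0$ and vanishing at $t=T$, with positive second variation $I(W,W)>0$; exponentiating $W$ gives a nearby timelike curve from $\Sigma$ to $q$ of strictly greater Lorentzian length, contradicting $L(\gamma) = \tau_\Sigma(q)$.

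For part (ii), let $q_n \to q$ in $J^+(\Sigma)$ and pick $r \in I^+(q)$, so that $q_n \ll r$ for all sufficiently large $n$. Upper semicontinuity: take maximizers $p_n \in \Sigma$ with $\tau(p_n, q_n) = \tau_\Sigma(q_n)$ from part (i); then $p_n \in J^-(r) \cap \Sigma$, so FCC supplies a subsequence with $p_n \to p \in \Sigma$, and continuity of $\tau$ gives $\limsup \tau_\Sigma(q_n) \leq \tau(p,q) \leq \tau_\Sigma(q)$. Lower semicontinuity: for $\varepsilon > 0$, choose $p \in \Sigma$ with $\tau(p,q) \geq \tau_\Sigma(q) - \varepsilon$ and use $\tau_\Sigma(q_n) \geq \tau(p, q_n) \to \tau(p, q)$. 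The main technical obstacle throughout is the focal-point argument in part (i), which requires the Lorentzian second-variation (submanifold index) theory in its correct signature; the rest of the proof is essentially a transcription of classical distance-to-a-submanifold arguments from Riemannian geometry, and depends crucially on FCC to produce the compact set of candidate foot-points.
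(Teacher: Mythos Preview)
Your proposal is correct and follows essentially the same approach as the paper, which applies FCC directly to $q$ itself (taking $K = \overline{J^-(q) \cap \Sigma}$ rather than passing through an auxiliary $r \in I^+(q)$) and then defers the timelike/normal/no-focal-point properties to the standard index-form theory in O'Neill rather than sketching the variation arguments as you do. The paper does not explicitly prove part (ii), so your semicontinuity argument is a welcome addition; note, though, that your detour through $r$ is only really needed there (to get a single compact set containing the foot-points for all $q_n$), and is unnecessary in part (i).
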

\begin{proof}
    Fix $q \in J^+(\Sigma)$. By global hyperbolicity, the function
    $\tau(\cdot, q) \colon M \to \R$ is continuous. Therefore, it
    attains its maximum on the compact subset\footnote{We denote by
      $\overline{A}$ the closure and by $A^\circ$ the interior of a
      subset $A \subset M$.} $K := \overline{J^-(q) \cap \Sigma}
    \subset \Sigma$ at some point $p \in K$. Since $\supp \tau(\cdot,
    q) \subset J^-(q)$, this implies that
 \[
 \tau(p, q)
 = \sup_{p' \in K} \tau(p', q)
 = \sup_{p' \in \Sigma} \tau(p', q) = \tau_{\Sigma}(q).
 \]
 The required properties of the maximizing curve are a standard
 result from the analysis of the index form (see,
 e.g.,~\cite[Chap.~10]{ONeill}).
\end{proof}

\subsection{Causal cut locus}
\label{subsec:causal-cut-locus}

Let $M$ be globally hyperbolic and $\Sigma \subset M$ a smooth,
spacelike, acausal, FCC hypersurface.
Let $N \Sigma \to \Sigma$ by the normal bundle of $\Sigma \subset M$,
and $\exp_\Sigma \colon N\Sigma \to M$ the normal exponential map.
We introduce the \emph{future unit-normal bundle}
\[
S^+N\Sigma
:= \{ v \in N\Sigma \mid \mbox{$v$ future-directed}, \SP{v, v} = -1 \} \,.
\]
For $v \in S^+N\Sigma$, denote by $\gamma_v \colon I_v \to M$ the
unique maximal geodesic with $\dot{\gamma}_v(0) = v$.
It can be shown that (see, e.g.,~\cite[Cor.~3.2.23]{treude:2011a}),
for each $v \in S^+N\Sigma$,
$\gamma_v$ maximizes the time-separation to $\Sigma$ for small parameter values,
in the sense that
\[
\tau_\Sigma(\gamma_v(t)) = L(\gamma_v|_{[0, t]})
\quad \mbox{ for all sufficiently small $t > 0$}.
\]
Therefore, for each $v \in S^+N\Sigma$, we have
\[
s^+_\Sigma(v) := \sup \{ t \in I_v \mid \tau_\Sigma(\gamma_v(t))
= L(\gamma_v|_{[0, t]}) \}
> 0\,.
\]
This defines a function
$s^+_\Sigma \colon S^+ N\Sigma \to (0, \infty]$,
called the \emph{$\Sigma$-future cut function}.
If $s_\Sigma^+(v) \in I_v$, then the point $\gamma_v(s_\Sigma^+(v))$
is called the \emph{$\Sigma$-cut point\/} of $\gamma_v$.
The collection of such points, i.e. the set
\[
\Cut^+(\Sigma)
:= \{ \exp_\Sigma(s_\Sigma^+(v)v) \mid v \in S^+N\Sigma \mbox{ and }
s_\Sigma^+(v) \in I_v \}
\subset M \,,
\]
is called the \emph{future cut locus\/} of $\Sigma$.

One can show that a point $q \in M$ lies in $\Cut^+(\Sigma)$ if and
only if either $q$ is a focal point of $\Sigma$, or if $q$ can be
connected to $\Sigma$ by more than one maximizing geodesic.
Furthermore, points of the second type are dense in the cut locus.%
\footnote{For proofs of these and the following statements,
  see~\cite[Sec.~3.2.5]{treude:2011a}.} Using these properties, one
has the following result.

\begin{theorem}
  \label{theorem:largest-normal-neighborhood}
  Let $\Sigma \subset M$ be a smooth, spacelike, acausal, FCC
  hypersurface. Let
  \[
  \Jcal^+_T(\Sigma) :=
  \{ tv \mid v \in S^+N\Sigma\;
  \mathrm{and}\; t \in [0, s^+_\Sigma(v)) \}
  \subset N\Sigma
  \]
  and $\Ical^+_T(\Sigma) = \Jcal^+_T(\Sigma)^\circ$.
  Then the following properties hold:
  \begin{enumerate}
  \item $\Ical^+(\Sigma) : = \exp_\Sigma(\Ical^+_T(\Sigma)) \subset M$ is open and diffeomorphic to $\Ical^+_T(\Sigma)$ via $\exp_\Sigma$.
  \item $\Ical^+(\Sigma) = I^+(\Sigma) \setminus \Cut^+(\Sigma)$.
  \item $\Cut^+(\Sigma) \subset M$ has measure zero and is closed.
  \item $\Ical^+(\Sigma)$ is the largest open subset of $I^+(\Sigma)$
      with the property that each of its points can be connected to
      $\Sigma$ by a unique maximizing geodesic.
  \end{enumerate}
\end{theorem}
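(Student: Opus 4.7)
The strategy is to deduce the four claims from two structural facts mentioned just before the theorem statement---(a) the characterisation of $\Cut^+(\Sigma)$ as the union of focal points of $\Sigma$ along normal geodesics and points reached by more than one maximizing normal geodesic, with the latter class being dense in $\Cut^+(\Sigma)$; and (b) continuity of the cut function $s^+_\Sigma \colon S^+N\Sigma \to (0, \infty]$---combined with the existence-of-maximizers theorem for FCC hypersurfaces stated at the end of the previous subsection.

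For~(1), $\Ical^+_T(\Sigma)$ is open in $N\Sigma$ by construction. For $tv \in \Ical^+_T(\Sigma)$ with $v \in S^+N\Sigma$ and $0 < t < s^+_\Sigma(v)$, the segment $\gamma_v|_{[0,t]}$ is the unique maximizer to $\gamma_v(t)$ and is focal-free; a standard index-form argument then yields invertibility of $(d\exp_\Sigma)_{tv}$. Hence $\exp_\Sigma$ is a local diffeomorphism on $\Ical^+_T(\Sigma)$, and injectivity is immediate from uniqueness of maximizers, so~(1) holds. Part~(2) follows in one direction from these remarks and, in the other, from the existence-of-maximizers theorem: any $q \in I^+(\Sigma) \setminus \Cut^+(\Sigma)$ is reached by a focal-free maximizer whose parameter must lie strictly below $s^+_\Sigma$ along the corresponding direction, placing $q$ in $\Ical^+(\Sigma)$.

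For~(3), closedness follows from continuity of $s^+_\Sigma$ and $\exp_\Sigma$: a convergent sequence of cut points $\exp_\Sigma(s^+_\Sigma(v_k)v_k)$ with $v_k \to v \in S^+N\Sigma$ has limit $\exp_\Sigma(s^+_\Sigma(v)v) \in \Cut^+(\Sigma)$. The measure-zero assertion writes $\Cut^+(\Sigma) = \exp_\Sigma(\Gamma)$ where
\[
\Gamma := \{ s^+_\Sigma(v)v : v \in S^+N\Sigma,\, s^+_\Sigma(v) \in I_v \}
\]
is the graph of a continuous function over the $n$-dimensional base $S^+N\Sigma$ inside the $(n+1)$-dimensional manifold $N\Sigma$; $\Gamma$ is therefore a null set, and its image under the smooth map $\exp_\Sigma$ remains null. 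Finally, for~(4), if $U \subseteq I^+(\Sigma)$ is open and each of its points has a unique maximizing geodesic to $\Sigma$, then $U \cap \Cut^+(\Sigma) = \emptyset$---for otherwise openness of $U$ combined with density of multiply-reached points in $\Cut^+(\Sigma)$ would produce a point of $U$ with two distinct maximizers, a contradiction---and so~(2) gives $U \subseteq \Ical^+(\Sigma)$.

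The main technical obstacle is the continuity of $s^+_\Sigma$ on which~(3) rests. Upper semicontinuity reduces to a standard limit argument on maximizing segments, but lower semicontinuity requires extracting convergent subsequences of \emph{pairs\/} of distinct maximizers from sequences of points in $\Cut^+(\Sigma)$; in the absence of compactness of $\Sigma$ itself, it is precisely the FCC hypothesis that supplies this compactness, by ensuring that the footpoints of the relevant maximizers remain in a relatively compact subset of $\Sigma$.
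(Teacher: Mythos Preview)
The paper does not actually supply a proof of this theorem: the sentence preceding it reads ``Using these properties, one has the following result'', and the accompanying footnote directs the reader to \cite[Sec.~3.2.5]{treude:2011a} for proofs of ``these and the following statements''. So there is nothing in the paper to compare your argument against; your sketch is essentially the standard route one would take (and presumably the one in the cited thesis), namely to reduce all four items to continuity of $s^+_\Sigma$ together with the focal/multiplicity characterisation of cut points and the density of the multiplicity locus.

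Your outline is sound, but the closedness argument in~(3) has a gap worth naming. You write ``a convergent sequence of cut points $\exp_\Sigma(s^+_\Sigma(v_k)v_k)$ with $v_k \to v$ has limit $\exp_\Sigma(s^+_\Sigma(v)v) \in \Cut^+(\Sigma)$'', but the hypothesis you are given is only $q_k \to q$ in $M$, not $v_k \to v$ in $S^+N\Sigma$; since $\Sigma$ need not be compact, the existence of a convergent subsequence of $(v_k)$ is not automatic. The fix is exactly the FCC mechanism you invoke in your final paragraph for lower semicontinuity: the footpoints $\pi(v_k)$ lie in $J^-(q_k)\cap\Sigma$, and for $q_k$ eventually in a compact neighbourhood of $q$ these are trapped in a common compact subset of $\Sigma$, yielding the required subsequence. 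One should also verify that the limiting cut value $s^+_\Sigma(v)$ actually lies in $I_v$, so that $\gamma_v(s^+_\Sigma(v))$ exists and is a genuine cut point; this follows because convergence $q_k\to q$ confines the geodesics $\gamma_{v_k}$ to a compact set near their cut times, preventing $\gamma_v$ from escaping before parameter $s^+_\Sigma(v)$. With these two points made explicit, your argument goes through.
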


Clearly, if $\Sigma \subset M$ is PCC instead of FCC,
then analogous properties hold with future sets replaced with past sets.

\subsection{Regularity of the time-separation}
\label{subsec:regularity-of-time-separation}

One can use Theorem~\ref{theorem:largest-normal-neighborhood}
to show that on $\Ical^+(\Sigma)$ the signed time-separation
$\tau_\Sigma$ is actually smooth rather than just continuous.
\begin{proposition}
 \label{proposition:signed-time-separation}
 Let $\Sigma \subset M$ be a smooth, spacelike, acausal, FCC hypersurface.
 Then the signed time-separation $\tau_\Sigma \colon M \to \R$
 is smooth on $\Ical^+(\Sigma)$ and has the following properties:
  \begin{enumerate}
  \item For each $q \in \Ical^+(\Sigma)$, we have%
   \footnote{Given a $C^1$ function, $f$, on $M$ the gradient
   of $f$ is the vector field defined by the relation
   $\SP{X, \gradient f} = X(f)$, for all vector fields $X$.}
   $\gradient \tau_\Sigma|_q = - \dot{\gamma}(\tau_\Sigma(q))$,
   where $\gamma \colon [0, \tau_\Sigma(q)] \to M$ is the unique
   maximizing geodesic from $\Sigma$ to $q$, parametrized to
   unit-speed.
  \item On $\Ical^+(\Sigma)$, the vector field $\gradient
   \tau_\Sigma$ is past-directed timelike and has
   unit-length. Furthermore, $\gradient \tau_\Sigma$ extends to a
   smooth unit normal for $\Sigma$.
  \end{enumerate}
\end{proposition}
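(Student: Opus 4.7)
The plan is to exploit the diffeomorphism $\exp_\Sigma \colon \Ical^+_T(\Sigma) \to \Ical^+(\Sigma)$ from Theorem~\ref{theorem:largest-normal-neighborhood} to transport the computation into the normal bundle, where everything is explicit. On $N\Sigma$, the function $r(w) := \sqrt{-\SP{w, w}}$ is smooth on the open subset where $\SP{w, w} < 0$, and satisfies $r(tv) = t$ for $v \in S^+N\Sigma$ and $t > 0$. For each $tv \in \Ical^+_T(\Sigma)$ the very definition of $s^+_\Sigma$ ensures that $\gamma_v|_{[0, t]}$ is maximizing from $\Sigma$ to $\exp_\Sigma(tv)$, so $\tau_\Sigma \circ \exp_\Sigma = r$ on $\Ical^+_T(\Sigma)$. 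Smoothness of $\tau_\Sigma$ on $\Ical^+(\Sigma)$ then follows at once from that of $r$ and $\exp_\Sigma^{-1}$.

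For assertion~(1), fix $q \in \Ical^+(\Sigma)$ and write $q = \exp_\Sigma(T v_0)$ with $v_0 \in S^+N\Sigma$, $T = \tau_\Sigma(q) > 0$; set $\gamma := \gamma_{v_0}$. Along the radial direction one has $d\tau_\Sigma(\dot\gamma(T)) = \tfrac{d}{dt}\big|_{T}\tau_\Sigma(\gamma(t)) = 1$, consistent with $\SP{-\dot\gamma(T), \dot\gamma(T)} = 1$. For transverse directions, I would argue by a Gauss-lemma variation: since $\Sigma$ has codimension one, $S^+N\Sigma \to \Sigma$ is a diffeomorphism, so any curve $p(s)$ in $\Sigma$ with $p(0) = \gamma(0)$ lifts to a unique curve $v(s) \in S^+N\Sigma$ with $v(0) = v_0$. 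Openness of $\Ical^+_T(\Sigma)$ yields $T v(s) \in \Ical^+_T(\Sigma)$ for small $s$, so the geodesic variation $\alpha(s, t) := \exp_\Sigma(t v(s))$ is, for each fixed $s$, unit-speed and maximizing on $[0, T]$; hence $\tau_\Sigma(\alpha(s, T)) = T$ is constant in $s$. Writing $X := \partial_s \alpha(0, T)$ and differentiating gives $d\tau_\Sigma(X) = 0$. As $p(s)$ varies over $T_{\gamma(0)}\Sigma$ and $d\exp_\Sigma$ is an isomorphism at $Tv_0$, these $X$ span an $n$-dimensional subspace of $\ker d\tau_\Sigma$; combined with $d\tau_\Sigma(\dot\gamma(T)) = 1 \neq 0$, this forces $\ker d\tau_\Sigma = \dot\gamma(T)^\perp$. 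Hence $\gradient \tau_\Sigma|_q$ is a scalar multiple of $\dot\gamma(T)$, and normalisation through $\SP{\gradient \tau_\Sigma|_q, \dot\gamma(T)} = d\tau_\Sigma(\dot\gamma(T)) = 1$ yields $\gradient \tau_\Sigma|_q = -\dot\gamma(T)$.

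Assertion~(2) is then immediate: $-\dot\gamma(T)$ is past-directed timelike with $\SP{-\dot\gamma(T), -\dot\gamma(T)} = -1$, so $\gradient \tau_\Sigma$ is a past-directed unit timelike vector field on $\Ical^+(\Sigma)$. For the extension across $\Sigma$, the vector field on $N\Sigma$ given in fibre coordinates by $tv \mapsto -v$ is smooth in a neighbourhood of the zero section; its pushforward under $d\exp_\Sigma$ agrees with $-\dot\gamma_v(t)$ on $\Ical^+(\Sigma)$, while at the zero section $d\exp_\Sigma$ reduces to the canonical identification on the normal fibre, so the pushforward extends $\gradient \tau_\Sigma$ across $\Sigma$ as the smooth past-directed unit normal field. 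The main technical point is the Gauss-lemma orthogonality step --- that variations tangent to $S^+N\Sigma$ generate curves of constant $\tau_\Sigma$; once this is in hand, everything else is the chain rule together with openness of $\Ical^+_T(\Sigma)$.
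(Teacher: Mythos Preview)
The paper's own proof is merely a pointer to the Riemannian analogue (Sakai, Prop.~4.8), so your detailed argument already supplies more than the paper does, and your route via $\tau_\Sigma \circ \exp_\Sigma = r$ on $\Ical^+_T(\Sigma)$ is exactly the intended one. The smoothness step and part~(2) are fine.

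There is, however, a genuine gap in the gradient computation. Your variation argument correctly shows that the fields $X = \partial_s\alpha(0,T)$ span an $n$-dimensional subspace $V \subseteq \ker d\tau_\Sigma$, and $d\tau_\Sigma(\dot\gamma(T)) = 1$ then forces $V = \ker d\tau_\Sigma$. But your next assertion, $\ker d\tau_\Sigma = \dot\gamma(T)^\perp$, is equivalent to $\langle X, \dot\gamma(T)\rangle = 0$ for every such $X$, and this you have not proved. Your closing sentence conflates ``variations generate curves of constant $\tau_\Sigma$'' (i.e.\ $d\tau_\Sigma(X) = 0$, which you established) with the actual Gauss-lemma orthogonality $\langle X, \dot\gamma(T)\rangle = 0$ (which you did not); these are distinct facts. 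The former places $\gradient\tau_\Sigma$ in the one-dimensional space $V^\perp$, but to conclude $\gradient\tau_\Sigma \parallel \dot\gamma(T)$ you still need $\dot\gamma(T) \in V^\perp$. The missing ingredient is the standard first-variation computation: with $J(t) := \partial_s\alpha(0,t)$ one has
\[
\tfrac{d}{dt}\langle J, \dot\gamma\rangle
= \langle \nabla_{\dot\gamma} J, \dot\gamma\rangle
= \langle \nabla_J \dot\gamma, \dot\gamma\rangle
= \tfrac{1}{2}\, J\langle\dot\gamma,\dot\gamma\rangle = 0
\]
since every $\alpha(s,\cdot)$ is unit-speed, and $\langle J(0), \dot\gamma(0)\rangle = 0$ because $J(0) \in T_{\gamma(0)}\Sigma$ while $\dot\gamma(0)$ is normal to $\Sigma$. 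With this line inserted, your argument is complete.
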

\begin{proof}
    The first part is shown in a similar way to the analogous
    statement in Riemannian geometry (cf., e.g.,~\cite[Prop.~4.8]{sakai:1996a}). The second part is an immediate
    consequence of the first.
\end{proof}

\subsection{Level Sets of Distance Functions}
\label{sec:distance-function}

Adopting the standard terminology from Riemannian geometry, we will
refer to a smooth function $\tau \in \Cinfty(M)$ that satisfies
$\SP{\gradient \tau, \gradient \tau} = -1$ as a \emph{timelike distance function}.
Without loss of generality, we assume that $\gradient \tau$
is past-directed (otherwise, consider $-\tau$). The principal example
that will be of interest to us is the signed time-separation function
of a smooth, spacelike, acausal, FCC hypersurface $\Sigma \subset M$,
restricted to $\Ical^+(\Sigma)$ (cf.~Proposition~\ref{proposition:signed-time-separation}).

Given such a function $\tau$, a short calculation shows that
\bel{distfnparallel} \nabla_{\gradient \tau} \gradient \tau = 0 \,.
\ee As a consequence, integral curves of the vector field $\gradient
\tau$ are (past-directed, timelike, unit-speed) geodesics. In
addition, $\norm{\gradient \tau} = 1$ implies that the map $\tau
\colon M \to \R$ is a (semi-Riemannian) submersion. Thus the level
sets of this map are embedded, spacelike hypersurfaces, which we
denote by $\Scal_t := \tau^{-1}(\{t\}) \subset M$. The restriction of
$\gradient \tau$ to $\Scal_t$ is a past-directed unit-normal to
$\Scal_t$, so the vector field $\n := - \gradient\tau$ yields the
corresponding future-directed unit normal.

Consider the subbundle $T\Scal := \bigcup_{t \in \image(\tau)}
T\Scal_t \subset TM$, and let $\tan \colon TM \to T\Scal$ be the
corresponding orthogonal projection. For each $t \in \image(\tau)$,
let $S_t \in \Secinfty(\End(T\Scal_t))$ be the shape operator of
$\Scal_t \subset M$ with respect to the future-directed unit normal
$\n$, which we define with sign convention
\[
S_t(w)
:= \tan \left( \nabla_w \n \right)
= \nabla_w \n + \SP{\nabla_w \n, \n} \n = \nabla_w \n \,,
\qquad w \in T\Scal_t \,.
\]
Therefore, if we define $S \in \Secinfty(\End(TM))$ by $S(X) =
\nabla_X \n$ for $X \in \Secinfty(TM)$, then, for each $t \in
\image(\tau)$, the restriction of $S$ to $T\Scal_t$ is the shape
operator of $\Scal_t \subset M$ with respect to $\n$. In particular,
the corresponding (future) \emph{mean curvature\/} $\H_t := \tr S_t
\in \Cinfty(\Scal_t)$ is given by
\bel{eq:mean-curvature-of-level-set-of-distance-function} \H_t(q) =
\sum_{i=1}^n \SP{\nabla_{e_i} \n, e_i} = - \sum_{i=1}^n
\SP{\nabla_{e_i} \gradient \tau, e_i} = - \tr \Hess \tau|_q = -
\dAlembert \tau(q) \,, \ee where $q \in \Scal_t$, and $e_1, \ldots,
e_n \in T_q\Scal_t$ is an arbitrarily chosen orthonormal basis. The
third equality follows from~\eqref{distfnparallel} and the fact that
$\gradient \tau|_q, e_1, \ldots, e_n$ is an orthonormal basis of $T_q
M$.

It will be important to us to know how the family of shape-operators
$\{S_t\}_t$ changes with respect to the parameter $t$. More
precisely, it will be crucial that they obey the following Riccati
equation.

\begin{theorem}
  \label{theorem:shape-operator-riccati-equation}
  Let $S \in \Secinfty(\End(TM))$ be given by $S(X) = \nabla_X
  \n$ for $X \in \Secinfty(TM)$. Then
  \bel{Riccati}
    \nabla_{\n} S + S^2 + R_{\n} = 0 \, ,
  \ee
  where $S^2 = S \circ S$ is to be understood pointwise, and
  $R_{\n} \in \Secinfty(\End(TM))$ denotes the map $X \mapsto \Rm(X, \n) \n$.
\end{theorem}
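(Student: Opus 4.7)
The plan is to verify \eqref{Riccati} pointwise on arbitrary vector fields $X$ by unpacking the definition of $\nabla_{\n} S$ as a covariant derivative of an endomorphism field, substituting $S(X) = \nabla_X \n$, and then trading a second covariant derivative of $\n$ for a curvature term via the definition of $\Rm$. The single most useful piece of extra information is \eqref{distfnparallel}, which gives $\nabla_{\n}\n = 0$ (since $\n = -\gradient\tau$); this is what causes one of the two second-derivative terms in the curvature identity to vanish.

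Concretely, I would first write, for an arbitrary $X \in \Secinfty(TM)$,
\[
(\nabla_{\n} S)(X) = \nabla_{\n}\bigl(S(X)\bigr) - S(\nabla_{\n} X) = \nabla_{\n}\nabla_X \n - S(\nabla_{\n} X).
\]
Next, I would invoke the curvature convention from the excerpt, $\Rm(\n, X)\n = \nabla_{\n}\nabla_X \n - \nabla_X \nabla_{\n}\n - \nabla_{[\n, X]}\n$, and use $\nabla_{\n}\n = 0$ to rewrite
\[
\nabla_{\n}\nabla_X \n = \Rm(\n, X)\n + \nabla_{[\n, X]}\n = -R_{\n}(X) + S([\n, X]),
\]
where I have used $\Rm(\n, X)\n = -\Rm(X, \n)\n = -R_{\n}(X)$. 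Substituting back and using the torsion-free identity $[\n, X] = \nabla_{\n} X - \nabla_X \n$ yields
\[
(\nabla_{\n} S)(X) = -R_{\n}(X) + S\bigl([\n, X] - \nabla_{\n} X\bigr) = -R_{\n}(X) - S(\nabla_X \n) = -R_{\n}(X) - S^2(X),
\]
which is \eqref{Riccati} applied to $X$. Since $X$ was arbitrary, the identity holds as an endomorphism equation.

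There is essentially no serious obstacle: the only place one has to be careful is in bookkeeping the sign conventions (the paper uses $\n = -\gradient\tau$, $S(X) = \nabla_X \n$, and $R_{\n}(X) = \Rm(X, \n)\n$ with the curvature sign fixed in the Notation subsection), and in remembering that $\nabla_{\n} \n = 0$ is what allows the second curvature term to drop out. Once those are straight, the computation is the standard Riccati identity for the shape operator along a geodesic congruence, now formulated globally on $M$ rather than fibrewise along each integral curve of $\n$.
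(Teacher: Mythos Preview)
Your proposal is correct and follows essentially the same approach as the paper's proof: both expand $(\nabla_{\n} S)(X)$ via the Leibniz rule, rewrite $\nabla_{\n}\nabla_X \n$ using the curvature identity together with $\nabla_{\n}\n = 0$ from~\eqref{distfnparallel}, and then collapse the remaining terms using $[\n, X] = \nabla_{\n} X - \nabla_X \n$. The sign bookkeeping you flag is handled correctly and matches the paper's conventions.
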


\begin{proof}
    Let $X \in \Secinfty(TM)$, then we have
  \begin{align*}
    (\nabla_{\n} S)(X)
    &= \nabla_{\n} (S(X)) - S(\nabla_{\n} X)
    \\
    &= \nabla_{\n} \left( \nabla_X \n \right) - S(\nabla_{\n} X)
    \\
    &= \nabla_X \left( \nabla_{\n} \n \right)
    + \Rm(\n, X) \n
    + \nabla_{[\n, X]} \n - S(\nabla_{\n} X)
    \\
    &= - \Rm(X, \n) \n
    + S([\n, X] ) - S(\nabla_{\n} X)
    \\
    &= - R_{\n}(X) - S^2(X) \,.
  \end{align*}
  In the fourth equality, we have used
  equation~\eqref{distfnparallel} and, in the last step, the
  identity $[\n, X] = \nabla_{\n} X - \nabla_X \n$.
\end{proof}

Properties of solutions of equation~\eqref{Riccati} will be studied
in the next section. We conclude this section with two additional
results about distance functions that we will require.

\vs Let $\Phi \colon \mathcal{U} \subset \R \times M \to M$ be the
flow of $\n$, i.e. $\frac{\D}{\D t} \Phi_t(p) = \n(\Phi_t(p))$ for
$(t, p) \in \mathcal{U}$. For $p \in M$, we have
\[
\frac{\D}{\D t} \tau(\Phi_t(p))
= \D_{\Phi_t(p)}\tau( \n_{\Phi_t(p)})
= \SP{\n|_{\Phi_t(p)}, \gradient \tau|_{\Phi_t(p)}}
= 1 \,.
\]
This implies that for $K \subset S_t$ and $s \in \R$ such that $\{s\}
\times K \subset \mathcal{U}$, we have $\Phi_s(K) \subset S_{t+s}$.
Using this observation, one can show the following standard result:

\begin{proposition}[First Variation of Area]
  \label{proposition:first-variation-of-area}
  For $t \in \image(\tau)$, let $K \subset \Scal_t$ be
  compact and assume that the flow, $\Phi$, of $\n$ is defined on
  $[-\epsilon, \epsilon] \times K$ for some $\epsilon > 0$. Set
  $K_s := \Phi_s(K) \subset \Scal_{t+s}$ for each $s \in
  [-\epsilon, \epsilon]$. Then
  \bel{eq:first-variation-of-area}
    \frac{\D}{\D s}\At{s=0} \area K_s
    = \int_{K} \tr S_t \D \mu_t \,.
  \ee
  Here $\mu_t$ denotes the Riemannian volume measure of
  $(\Scal_t, g|_{\Scal_t})$.
\end{proposition}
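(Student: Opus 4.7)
The plan is to reduce the problem to a local coordinate calculation and then differentiate the Riemannian volume element of $\Scal_{t+s}$ directly at $s = 0$. Since $K$ is compact, I would cover it by finitely many relatively compact coordinate patches on $\Scal_t$ and, via a subordinate partition of unity, reduce to the case of a single parametrization $\phi \colon U \to \Scal_t$, $U \subset \R^n$ open; this is legitimate because both sides of \eqref{eq:first-variation-of-area} are linear in $K$. Setting $\phi_s := \Phi_s \circ \phi \colon U \to \Scal_{t+s}$ and $X_i(s, \cdot) := \D \phi_s(e_i)$ yields a moving local frame for $T\Scal_{t+s}$, and since $\Scal_{t+s}$ is spacelike, the induced metric coefficients $g_{ij}(s, x) := \SP{X_i(s,x), X_j(s,x)}$ define a Riemannian metric, so the area of $\phi_s(V) \subset K_s$ equals $\int_V \sqrt{\det g_{ij}(s, x)}\, \D x$ for each measurable $V \subset U$.

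The crucial step is computing $\partial_s g_{ij} \At{s=0}$. Because $X_i$ is the flow push-forward of the coordinate vector field $\partial / \partial x^i$, one has $[\n, X_i] = 0$, hence $\nabla_{\n} X_i = \nabla_{X_i} \n$. The constancy of $\SP{\n, \n} = -1$ forces $\SP{\nabla_{X_i} \n, \n} = 0$, so at $s = 0$ the vector $\nabla_{X_i} \n$ already lies in $T\Scal_t$ and equals $S_t(X_i)$; this is the point where the Lorentzian signature enters essentially, since no extra normal term arises to disturb the calculation. Symmetry of $S_t$ (from torsion-freeness of $\nabla$ and hypersurface-orthogonality of $\n$) then yields
\[
  \partial_s g_{ij} \At{s=0}
  = \SP{\nabla_{\n} X_i, X_j} + \SP{X_i, \nabla_{\n} X_j}
  = 2 \SP{S_t(X_i), X_j},
\]
and consequently $g^{ij}(0, x) \, \partial_s g_{ij} \At{s=0} = 2 \tr S_t(\phi(x))$.

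Applying Jacobi's formula for the derivative of a determinant gives $\frac{\D}{\D s}\At{s=0} \sqrt{\det g_{ij}(s, x)} = (\tr S_t)(\phi(x)) \sqrt{\det g_{ij}(0, x)}$. Differentiating the area integral under the integral sign (legitimate by compactness of $K$ and smoothness of $\Phi$ on a neighbourhood of $[-\epsilon, \epsilon] \times K$), and then reassembling the partition of unity, converts this local identity into the global formula \eqref{eq:first-variation-of-area}. No step poses a serious obstacle here: this is the standard first-variation-of-area calculation, with the only noteworthy point being that the Lorentzian condition $\SP{\n, \n} = -1$ automatically renders $\nabla_{X_i} \n$ tangent to $\Scal_t$, so the argument proceeds essentially as in the Riemannian case.
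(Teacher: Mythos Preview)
Your proof is correct and is precisely the standard first-variation-of-area computation. The paper itself does not supply a proof of this proposition at all: it merely introduces it as a ``standard result'' and states it without argument. Your write-up therefore does not conflict with anything in the paper; it simply fills in the omitted details, and the route you take (local coordinates, $[\n, X_i]=0$ from the flow, Jacobi's formula for the determinant, differentiation under the integral via compactness) is exactly the conventional one that the authors presumably had in mind.
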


Finally, we recall the following version of Fubini's theorem.

\begin{proposition}[Coarea Formula]
 \label{proposition:coarea-formula}
 For $f \in \mathcal{L}^1(M, \D\mu_g)$, we have $f|_{\Scal_t}
 \in \mathcal{L}^1(\Scal_t, \D \mu_t)$ for almost all $t \in
 \image(\tau)$ and
 \bel{eq:coarea-formula}
  \int_M f \D \mu_g
  = \int_{\R} \bigg( \int_{\Scal_t} f|_{\Scal_t}
  \D \mu_t \bigg) \D t
  \,.
 \ee
\end{proposition}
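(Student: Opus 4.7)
The plan is to obtain a local product structure on $M$ using the flow of $\n$, verify that the Lorentzian volume density factorizes accordingly, and then conclude by combining classical Fubini with a partition-of-unity argument.

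First, I would fix $p \in M$ with $\tau(p) = t_0$ and use the flow $\Phi$ of $\n$ to define, for a sufficiently small neighborhood $U \subset \Scal_{t_0}$ of $p$ and some $\delta > 0$, the map
\[
F \colon (-\delta, \delta) \times U \to M, \qquad (s, q) \mapsto \Phi_s(q).
\]
As observed just before Proposition~\ref{proposition:first-variation-of-area}, one has $\Phi_s(q) \in \Scal_{t_0 + s}$, so $\tau \circ F(s, q) = t_0 + s$. Since $\partial_s F = \n$ is transverse to $T\Scal_{t_0+s}$, after possibly shrinking $U$ and $\delta$ the map $F$ is a diffeomorphism onto an open neighborhood of $p$.

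Second, I would compute the pulled-back metric. Because $\SP{\n, \n} = -1$ and $\n$ is orthogonal to each level set $\Scal_t$, the metric $F^*\g$ takes the Gaussian-normal form $F^*\g = -\D s^2 + h_s$, where $h_s := \Phi_s^*(\g|_{\Scal_{t_0+s}})$ is a smooth family of Riemannian metrics on $U$. In adapted coordinates the corresponding Gram matrix is block-diagonal with $|\det F^*\g| = \det h_s$, so the Lorentzian volume density factorizes as
\[
F^*\D\mu_\g = \D s \wedge \Phi_s^* \D\mu_{t_0+s}.
\]
For $f \in \mathcal{L}^1(M, \D\mu_\g)$ whose support lies in $F((-\delta,\delta) \times U)$, applying the classical Fubini theorem to $f \circ F$ on $(-\delta, \delta) \times U$ then yields~\eqref{eq:coarea-formula} on this sub-interval, along with the almost-everywhere $\mathcal{L}^1$-property of $f|_{\Scal_t}$.

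The main obstacle is globalization: the flow of $\n$ need not be complete, $\image(\tau)$ need not be all of $\R$, and $M$ is not covered by a single flow box. I would handle this by choosing a countable, locally finite cover of $M$ by flow boxes of the above form, together with a subordinate partition of unity $\{\chi_\alpha\}$. For $f \in \mathcal{L}^1(M, \D\mu_\g)$, first apply the local formula to $|f|\chi_\alpha$ and sum by monotone convergence; countability of the cover ensures that the set of $t \in \image(\tau)$ on which slice-integrability fails is contained in a countable union of $\R$-null sets, hence null. Then apply the local formula to $\chi_\alpha f$ itself and sum by dominated convergence (with dominant $|f|$) to obtain~\eqref{eq:coarea-formula} in full generality.
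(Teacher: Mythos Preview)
The paper does not actually supply a proof of this proposition: it is introduced with the phrase ``we recall the following version of Fubini's theorem'' and then simply stated. Your proposal therefore provides the details the paper omits, and it does so correctly via the standard route: Gaussian-normal flow-box coordinates built from $\Phi$, block-diagonality of $F^*\g$ (which follows because $\Phi_s$ carries $\Scal_{t_0}$ into $\Scal_{t_0+s}$, so the pushed-forward coordinate vectors remain orthogonal to $\n$), factorisation of the volume density, and then Fubini plus a partition of unity. There is nothing to compare against in the paper, but your argument is the natural one and is sound as written.
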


\section{Riccati comparison}
\label{sec:Riccati}

We now abstractly study some properties of solutions of the Riccati
equation~\eqref{Riccati}. Let $E$ be an $n$-dimensional, real vector
space with positive-definite inner product $\SP{\cdot, \cdot}$. Denote
by $\mathsf{S}(E) \subset \End(E)$ the subspace of linear maps $E \to E$ that
are self-adjoint with respect to $\SP{\cdot, \cdot}$. For $A, B \in
\mathsf{S}(E)$, we write $A \geq B$ if $A-B \geq 0$ in the sense that
$\SP{(A-B)v, v} \geq 0$ for all $v \in E$.

\

We will require the following result from~\cite{eschenburg.heintze:1990a}:

\begin{theorem}
  \label{theorem:riccati-comparison}
  Let $R_1, R_2 \colon \R \to \mathsf{S}(E)$ be smooth with $R_1 \ge R_2$, in the sense that $R_1(t)
  \geq R_2(t)$ for all $t \in \R$. Assume that for $i=1, 2$ we have a solution $S_i
  \colon (0, t_i) \to \mathsf{S}(E)$ of $S_i' + S_i^2 + R_i = 0$,
  which cannot be extended beyond $t_i$. If $U := S_2-S_1$ has a
  continuous extension to $t=0$ with $U(0) \geq 0$, then the
  following hold.
  \begin{enumerate}
  \item We have $t_1 \leq t_2$ and $S_1 \leq S_2$ on $(0, t_1)$.
  \item The function $d(t) := \dim \ker U(t)$ is monotonically
    decreasing on $(0, t_1)$.
  \item If $S_1(s) = S_2(s)$ for some $s \in (0, t_1)$, then on $(0,
    s]$ we have $S_1 = S_2$ and $R_1 = R_2$.
  \end{enumerate}
\end{theorem}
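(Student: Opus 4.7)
The plan is to reduce everything to a monotonicity statement for the difference $U := S_2 - S_1$ under a suitable conjugation. Subtracting the two Riccati equations gives $U' = (S_1^2 - S_2^2) + (R_1 - R_2)$; setting $W := S_1 + S_2 \in \mathsf{S}(E)$ and using self-adjointness, one rewrites $S_1^2 - S_2^2 = -\tfrac{1}{2}(WU + UW)$, so that
\[ U' + \tfrac{1}{2}(WU + UW) = R_1 - R_2 \geq 0 \]
on $(0, \min(t_1, t_2))$. Next, I would introduce the fundamental solution $\Phi$ of the linear ODE $\Phi' = \tfrac{1}{2} W \Phi$ with $\Phi(0) = I$. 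Since $W$ is self-adjoint, $(\Phi^*)' = \tfrac{1}{2} \Phi^* W$, and a direct computation yields
\[ \frac{d}{dt}\bigl(\Phi^* U \Phi\bigr) = \Phi^*(R_1 - R_2)\Phi \geq 0. \]
Thus $\Phi^* U \Phi$ is non-decreasing in the semidefinite order with initial value $U(0) \geq 0$, and invertibility of $\Phi$ (as the fundamental matrix of a linear ODE) yields $U(t) \geq 0$ for all $t \in (0, \min(t_1, t_2))$, i.e.\ $S_1 \leq S_2$ on that interval.

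To complete claim (1), I would show $t_1 \leq t_2$ by contradiction: if $t_2 < t_1$, then $S_1$ is bounded on $[0, t_2]$, and $S_2 \geq S_1$ bounds the eigenvalues of $S_2$ from below. Via Rayleigh's variational characterization and the Riccati equation, the Lipschitz function $\mu(t) := \lambda_{\max}(S_2(t))$ satisfies $\mu'(t) \leq -\mu(t)^2 + \|R_2(t)\|$ almost everywhere, which rules out blow-up to $+\infty$; hence $S_2$ is bounded on $[0, t_2)$ and extends smoothly past $t_2$ by standard ODE theory, contradicting the maximality of $t_2$. For claim (2), semidefinite monotonicity of $\Phi^* U \Phi$ forces its kernel to shrink with $t$: any $v$ with $\langle (\Phi^* U \Phi)(t) v, v \rangle = 0$ satisfies $0 \leq \langle (\Phi^* U \Phi)(s) v, v \rangle \leq \langle (\Phi^* U \Phi)(t) v, v \rangle = 0$ for $s \leq t$, whence $v \in \ker (\Phi^* U \Phi)(s)$ by non-negativity; conjugation by the invertible $\Phi$ transfers this to $\dim \ker U(t) \leq \dim \ker U(s)$. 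Claim (3) is then immediate: $S_1(s) = S_2(s)$ gives $U(s) = 0$, hence $\dim \ker U(s) = n$, so $U \equiv 0$ on $(0, s]$ by (2), and substituting $S_1 = S_2$ back into the two Riccati equations yields $R_1 = R_2$ on that interval.

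The principal obstacle I foresee is the endpoint $t = 0$: the hypothesis only demands continuity of the difference $U$ at $0$, so $W = S_1 + S_2$ may fail to be integrable there and $\Phi$ need not admit the initial condition $\Phi(0) = I$. I would circumvent this by working with propagators $\Phi_\epsilon$ defined on $[\epsilon, \min(t_1, t_2))$ by $\Phi_\epsilon(\epsilon) = I$; the same computation gives $\Phi_\epsilon(t)^* U(t) \Phi_\epsilon(t) \geq \Phi_\epsilon(\epsilon)^* U(\epsilon) \Phi_\epsilon(\epsilon) = U(\epsilon)$, and sending $\epsilon \to 0^+$ while exploiting continuity of $U$ at $0$ recovers $U(t) \geq 0$ for every fixed $t > 0$.
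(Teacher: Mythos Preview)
The paper does not prove this theorem at all; it merely quotes it from Eschenburg--Heintze \cite{eschenburg.heintze:1990a}. Your approach---subtracting the two Riccati equations, writing $U'+\tfrac12(WU+UW)=R_1-R_2\ge 0$ with $W=S_1+S_2$, and conjugating by a fundamental solution of $\Phi'=\tfrac12 W\Phi$ to obtain Loewner monotonicity of $\Phi^*U\Phi$---is exactly the method of that reference, so in spirit you are reproducing the intended proof rather than offering an alternative. Your deductions of (2) and (3) from (1), and your blow-up argument for $t_1\le t_2$ via a differential inequality for $\lambda_{\max}(S_2)$, are fine.

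The one point that is not yet airtight is the endpoint step you flag yourself. From $\Phi_\epsilon(t)^*U(t)\Phi_\epsilon(t)\ge U(\epsilon)$ and $U(\epsilon)\to U(0)\ge 0$ you cannot simply ``send $\epsilon\to 0^+$'': the left-hand side depends on $\epsilon$ through $\Phi_\epsilon(t)$, which may diverge when $W$ is non-integrable at $0$, so neither side of the inequality need converge. What your inequality \emph{does} give is that if $U(\epsilon_0)\ge 0$ for some $\epsilon_0>0$ then $U(t)\ge 0$ for all $t\ge\epsilon_0$; equivalently, the set $\{t>0:U(t)\not\ge 0\}$ is an initial interval. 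Continuity of $U$ at $0$ alone only yields $\lambda_{\min}(U(0))=0$ in the bad case and does not close the argument. Eschenburg--Heintze handle this by working with the associated Jacobi solutions rather than the bare propagator, which supplies the extra control at $t=0$; you should either invoke that, or replace the limiting sentence by a genuine argument (for instance, a perturbation $U(0)\mapsto U(0)+\delta I$ so that $U(\epsilon)>0$ for small $\epsilon$, followed by $\delta\downarrow 0$, once one checks the perturbed data still arise from admissible Riccati solutions).
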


Let $R \colon \R \to \mathrm{S}(E)$ be smooth, and $S
\colon I' \to \mathsf{S}(E)$ a solution of the Riccati equation
\bel{eq:riccati-equation}
S' + S^2 + R = 0,
\ee
for some interval $I' \subseteq \R$. Using Theorem~\ref{theorem:riccati-comparison}, we now
show that a lower bound on $\tr R$ implies an upper bound on $\tr S$.

\begin{definition}
  \label{definition:expansion-vorticity-shear}
  Let $S \colon I' \to \End(E)$. We define the \emph{expansion\/}
  $\theta \in \Cinfty(I')$, the \emph{vorticity\/} $\omega \colon I'
  \to \End(E)$, and the \emph{shear\/} $\sigma \colon I' \to
  \End(E)$ by
  \begin{subequations}
    \begin{align}
      \theta(t) &:= \tr S(t)
      \, , \\
      \omega(t) &:= (S(t) - S(t)^\dagger) / 2
      \, , \\
      \label{eq:shear}
      \sigma(t)
      &:= (S(t) + S(t)^\dagger) / 2 - \theta(t)/n \cdot \id_E
      \,.
    \end{align}\end{subequations}
  (Recall that $n = \dim E$.)
\end{definition}

Taking the trace of the Riccati equation $S'+S^2+R = 0$ and rewriting
the quadratic term, one obtains the following result (see, e.g.,~\cite[Chap.~4]{HE}).

\begin{lemma}
  \label{lemma:raychaudhuri}
  Let $S \colon I' \to \End(E)$ be a solution of the Riccati
  equation~\eqref{eq:riccati-equation}. Then the expansion, vorticity and shear satisfy the scalar Riccati equation
  \bel{eq:raychaudhuri}
    \theta' + \frac{\theta^2}{n} +
    \tr(\omega^2) + \tr(\sigma^2) + \tr(R) = 0 \,.
  \ee
  If $S$ is self-adjoint, then~\eqref{eq:raychaudhuri} reduces to
  the form
\[
\theta' + \frac{\theta^2}{n} + \tr(\sigma^2) + \tr(R) = 0 \,.
\]
\end{lemma}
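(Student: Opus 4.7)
The plan is to take the trace of the Riccati equation $S' + S^2 + R = 0$ and then expand $\tr(S^2)$ using the decomposition of $S$ into its trace, traceless-symmetric and antisymmetric parts.

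First I would note that taking $\tr$ of the Riccati equation yields, since $\tr$ commutes with $\frac{d}{dt}$, the scalar identity
\[
\theta' + \tr(S^2) + \tr(R) = 0.
\]
So the whole content of the lemma is the identity $\tr(S^2) = \theta^2/n + \tr(\omega^2) + \tr(\sigma^2)$.

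To establish this, I would write $S = \omega + \sigma + \tfrac{\theta}{n}\id_E$, which is immediate from Definition~\ref{definition:expansion-vorticity-shear}: the antisymmetric part of $S$ is $\omega$ and the symmetric part is $\sigma + (\theta/n)\id_E$ (with $\sigma$ traceless by construction). Squaring and taking the trace, the nine resulting terms simplify via the following orthogonality observations: $\tr(\omega) = 0$ (antisymmetric), $\tr(\sigma) = 0$ (by definition~\eqref{eq:shear}), and $\tr(\sigma \omega) = 0$ because for symmetric $\sigma$ and antisymmetric $\omega$ one has $\tr(\sigma\omega) = \tr((\sigma\omega)^\dagger) = \tr(\omega^\dagger \sigma^\dagger) = -\tr(\omega\sigma) = -\tr(\sigma\omega)$. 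Thus all the cross-terms involving $(\theta/n)\id_E$ drop out and the only surviving contribution from $2\cdot \omega \cdot \sigma$-type terms vanishes as well, leaving
\[
\tr(S^2) = \tr(\omega^2) + \tr(\sigma^2) + \frac{\theta^2}{n}\,\tr(\id_E)\cdot\frac{1}{n} \cdot n = \tr(\omega^2) + \tr(\sigma^2) + \frac{\theta^2}{n}.
\]
Substituting this into the traced Riccati equation gives~\eqref{eq:raychaudhuri}. For the self-adjoint case, $\omega \equiv 0$ by definition, so the $\tr(\omega^2)$ term disappears.

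There is no real obstacle here — the argument is essentially linear algebra. The only place where one must be slightly careful is the trace identity $\tr(\sigma \omega) = 0$, which uses that one factor is symmetric and the other antisymmetric; I would make this explicit to keep the bookkeeping clean, but no deeper input (no use of Theorem~\ref{theorem:riccati-comparison}, no assumption on the sign of $R$) is needed.
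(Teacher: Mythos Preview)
Your proposal is correct and is exactly the approach the paper indicates: the paper itself offers no detailed proof, merely stating that one takes the trace of $S'+S^2+R=0$ and rewrites the quadratic term (citing Hawking--Ellis), which is precisely what you carry out. The only blemish is the garbled intermediate expression $\frac{\theta^2}{n}\,\tr(\id_E)\cdot\frac{1}{n}\cdot n$; what you mean is $(\theta/n)^2\tr(\id_E)=(\theta^2/n^2)\cdot n=\theta^2/n$, and with that cleaned up the argument is complete.
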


\vs
We now come to the main statement of this section.
\begin{theorem}
  \label{theorem:scalar-riccati-comparison}
  Let $R \colon \R \to \mathsf{S}(E)$ be smooth and assume that $\tr
  R \geq n \cdot \kappa$ for some $\kappa \in \R$ and
  $n=\dim E$. Furthermore, let $S \colon (0, b) \to \mathsf{S}(E)$
  be a solution of $S' + S^2 + R = 0$, and $s_\kappa \colon (0,
  b_\kappa) \to \R$ a solution of $s_\kappa' + s_\kappa^2 + \kappa =
  0$ that cannot be extended beyond $b_\kappa$. If $\lim_{t
   \searrow 0} (s_\kappa(t) - \tr S(t)/n)$ exists and is
  nonnegative, then $b \leq b_\kappa$ and
\[
\tr S(t) \leq n \cdot s_\kappa(t)
\]
  for all $t \in (0, b)$. Moreover, if equality holds for some $t_0
  \in (0, b)$, then equality also holds for all $t < t_0$. In this
  case, we also have $S(t) = s_\kappa(t) \id_E$ and $R(t) = \kappa
  \cdot \id_E$ for all $t \in (0, t_0]$.
\end{theorem}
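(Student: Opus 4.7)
The plan is to reduce the matrix Riccati problem to a scalar comparison by taking traces, and then to invoke Theorem~\ref{theorem:riccati-comparison} in the one-dimensional case. First, I set $\theta := \tr S$. Because $S$ takes values in $\mathsf{S}(E)$, the vorticity $\omega$ of Definition~\ref{definition:expansion-vorticity-shear} vanishes, and Lemma~\ref{lemma:raychaudhuri} reduces to
\[
\theta' + \frac{\theta^2}{n} + \tr(\sigma^2) + \tr R = 0.
\]
The shear $\sigma$ is self-adjoint (since $S$ is) and trace-free, so $\tr(\sigma^2) \geq 0$; combining this with $\tr R \geq n\kappa$ and dividing by $n$ shows that $\tilde\theta := \theta/n$ satisfies a scalar Riccati equation $\tilde\theta' + \tilde\theta^2 + \tilde R = 0$ for some smooth $\tilde R$ (defined on the domain of $\tilde\theta$, and extended smoothly to $\R$ if desired) with $\tilde R \geq \kappa$ pointwise.

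Second, I apply Theorem~\ref{theorem:riccati-comparison} with $E = \R$, $S_1 = \tilde\theta$, $R_1 = \tilde R$, $S_2 = s_\kappa$, $R_2 = \kappa$, noting that $R_1 \geq R_2$. The maximality of the domains $(0,b)$ and $(0,b_\kappa)$ is inherited from that of $S$ and $s_\kappa$, and the hypothesis that $\lim_{t \searrow 0}(s_\kappa(t) - \tilde\theta(t))$ exists and is nonnegative furnishes the required continuous extension of $U = S_2 - S_1$ to $t=0$ with $U(0) \geq 0$. Part~(1) of that theorem then yields $b \leq b_\kappa$ and $\tilde\theta \leq s_\kappa$ on $(0,b)$, which is the asserted trace bound $\tr S(t) \leq n \cdot s_\kappa(t)$.

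For the rigidity statement, suppose $\tr S(t_0) = n s_\kappa(t_0)$ at some $t_0 \in (0,b)$, so $\tilde\theta(t_0) = s_\kappa(t_0)$; part~(3) of Theorem~\ref{theorem:riccati-comparison} then forces $\tilde\theta = s_\kappa$ and $\tilde R = \kappa$ throughout $(0, t_0]$, giving the first equality claim. Substituting $\theta = n s_\kappa$ back into the Raychaudhuri equation and subtracting $n$ times $s_\kappa' + s_\kappa^2 + \kappa = 0$ leaves
\[
\tr(\sigma^2) + (\tr R - n\kappa) = 0 \quad \text{on } (0, t_0].
\]
Both summands are nonnegative, so each vanishes; self-adjointness of $\sigma$ upgrades $\tr(\sigma^2)=0$ to $\sigma = 0$, whence $S = (\theta/n)\,\id_E = s_\kappa\,\id_E$. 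Re-inserting this into $S' + S^2 + R = 0$ gives $R = -(s_\kappa' + s_\kappa^2)\,\id_E = \kappa\,\id_E$, closing the argument. No serious obstacle is expected: the one conceptual point to flag is that self-adjointness of $S$ eliminates the vorticity contribution in Raychaudhuri so that the shear appears with a sign cooperating with the curvature lower bound; after this observation, the proof is a direct application of the scalar case of the matrix comparison theorem followed by re-reading the Raychaudhuri identity in the saturated case.
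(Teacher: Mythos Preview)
Your proposal is correct and follows essentially the same route as the paper: reduce to a scalar Riccati equation for $\theta/n$ via Lemma~\ref{lemma:raychaudhuri}, use $\tr(\sigma^2)\ge 0$ together with $\tr R\ge n\kappa$ to bound the effective scalar curvature below by $\kappa$, and apply Theorem~\ref{theorem:riccati-comparison} in the one-dimensional case. Your rigidity argument (forcing $\tr(\sigma^2)=0$ and $\tr R=n\kappa$, hence $\sigma=0$ by self-adjointness, then reading off $R=\kappa\,\id_E$ from the matrix Riccati equation) is likewise the same as the paper's, which phrases the step $\tr(\sigma^2)=0\Rightarrow S\in\R\cdot\id_E$ as an instance of Cauchy--Schwarz.
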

\begin{proof}
 Set $r := \tfrac{1}{n} \left( \tr(\sigma^2) + \tr(R) \right)$. By
 the previous Lemma, $\tr S/n$ obeys the scalar Riccati equation
  \[
  \left( \frac{\tr S}{n} \right)'
  + \left( \frac{\tr S}{n} \right)^2 + r
  = 0 \,.
  \]
  Furthermore, by assumption we have
  \bel{r}
  r = \frac{\tr(\sigma^2) + \tr(R)}{n}
  \geq \frac{\tr(R)}{n}
  \geq \kappa \,.
  \ee
  Since $s_\kappa$ obeys the scalar Riccati equation $s_\kappa' +
  s_\kappa^2 + \kappa = 0$ and $\lim_{t \searrow 0} (s_\kappa(t) -
  \tr S(t)/n)$ exists and is nonnegative, we can apply
  Theorem~\ref{theorem:riccati-comparison}. Thus $b \leq b_\kappa$
  and $\tr S / n \leq s_\kappa$, as claimed.

  If equality holds for some $t_0 \in (0, b)$, then by
  Theorem~\ref{theorem:riccati-comparison}\textit{(3)} equality also
  holds for all $t < t_0$ and $r(t) = n \cdot \kappa$ for all $t \in
  (0, t_0]$. From~\eqref{r}, it follows that $\tr(\sigma(t)^2) = 0$
  and $\tr R(t) = n \cdot \kappa$ for all $t \in (0, t_0]$. By the
  definition, \eqref{eq:shear}, of $\sigma$, the fact that $\tr(\sigma(t)^2) = 0$
  implies that $\tr(S(t)^2) = n \cdot (\tr S(t))^2$. By the
  Cauchy--Schwarz inequality, this can only hold if $S(t)$ is a
  multiple of the identity. Thus $S(t) = s_\kappa(t) \id_E$ for all
  $t \in (0, t_0]$, since $\tr S(t) = n \cdot s_\kappa(t)$. Finally,
  from the Riccati equation for $S$ it follows that $R(t) = \kappa
  \cdot \id_E$ for all $t \in (0, t_0]$.
\end{proof}

\section{Lorentzian Ricci Curvature Comparison}
\label{sec:RicciComparison}

In this section, we will establish various comparison theorems for
globally hyperbolic Lorentzian manifolds with Ricci curvature bounded
from below.

\subsection{Notation and Curvature Conditions}

In the following, let $M$ be an $(n+1)$-dimensional globally
hyperbolic spacetime and $\Sigma \subset M$ a smooth, spacelike,
acausal, FCC hypersurface with signed time-separation $\tau_\Sigma
\colon M \to \R$.
\begin{definition}
    We define the \emph{future spheres and balls\/} of radius $t>0$
    around $\Sigma$ to be the sets
    \[
    S^+_\Sigma(t) := \tau_\Sigma^{-1}(t) \subset I^+(\Sigma)
    \qquad \textrm{and} \qquad
    B^+_\Sigma(t) := \hspace{-0.2cm} \bigcup_{\tau \in (0, t)}
    \hspace{-0.2cm} S^+_\Sigma(\tau) \subset I^+(\Sigma) \,.
    \]
    For convenience, we set $S^+_\Sigma(0) = \Sigma$. Furthermore, in
    order to avoid the cut locus of $\Sigma$, we set
    $\Scal^+_\Sigma(t) = S^+_\Sigma(t) \cap \Ical^+(\Sigma)$ and
    $\mathcal{B}^+_\Sigma(t) = B^+_\Sigma(t) \cap \Ical^+(\Sigma)$.
\end{definition}

By Proposition~\ref{proposition:signed-time-separation}, $\tau_\Sigma$
is a distance function on $\Ical^+(\Sigma)$, the level sets of which
are the restricted future spheres $\Scal^+_\Sigma(t)$. From the
results of Section~\ref{sec:distance-function}, on $\Ical^+(\Sigma)$,
the vector field $\n := - \gradient \tau_\Sigma$ is the
future-directed timelike unit-normal to the sets $\Scal^+_\Sigma(t)$,
and the corresponding mean curvature of the hypersurfaces
$\Scal^+_\Sigma(t)$ is given by
\bel{eq:dAlembertian-and-mean-curvature} \H_t(q) = \tr S|_q = -
\dAlembert \tau_\Sigma(q), \qquad q \in \Scal^+_\Sigma(t).
\ee

In general, future balls and spheres do not have finite volume and
area, respectively.%
\footnote{This is clear in Minkowski spacetime $\R^{n+1}$, choosing
  $\Sigma = \{0\} \times \R^n$.} Therefore,
following~\cite{EhrlichJungKim, EhrlichSanchez}, we introduce
\emph{truncated\/} spheres and balls. For $A \subseteq \Sigma$, we
set
\[
S^+_A(t)
 = \left\{
  q \in S^+_\Sigma(t)
  \mid
  \exists\, p \in A: \tau_\Sigma(q) = \tau(p, q)
 \right\} \,,
\]
i.e. $S^+_A(t) \subseteq S^+_\Sigma(t)$ consists of those points that
can be reached from $A$ by a maximizing geodesic of length $t$. We
define $B^+_A(t)$ similarly, and again we set $\Scal^+_A(t) = S^+_A(t)
\cap \Ical^+(\Sigma)$ and $\mathcal{B}^+_A(t) = B^+_A(t) \cap
\Ical^+(\Sigma)$. If $A$ is compact and $t > 0$ is sufficiently small
such that $S^+_A(t)$ does not intersect $\Cut^+(\Sigma)$, then
$S^+_A(t) \subset \Scal^+_\Sigma(t)$ is also compact, and hence has
finite area. Similarly, if $B^+_A(t)$ does not intersect the causal
cut locus of $\Sigma$, it has finite volume by the coarea
formula~\eqref{eq:coarea-formula}.

\vs The following curvature conditions will be assumed in all
comparison statements.
\begin{definition}
 \label{def:cosmological-curvature-condition}
 For constants $\kappa, \beta \in \R$, we say the pair $(M, \Sigma)$
 satisfies the \emph{cosmological comparison condition\/}
 $\CCC(\kappa, \beta)$ if the following two conditions hold.
 \begin{compactenum}
 \item $M$ has timelike Ricci curvature bounded from below by
     $\kappa$, i.e. $\Ric(v, v) \geq n \kappa$ for all $v \in TM$ with
     $\SP{v, v} = -1$.
 \item The mean curvature $\H \in \Cinfty(\Sigma)$ of $\Sigma \subset
     M$ w.r.t. $\n$ is bounded from above by $\beta$.
 \end{compactenum}
\end{definition}

\begin{remark}
    \label{rem:timelike-curvature-bounds}
{\ }\\[-4mm]
  \begin{compactenum}
  \item $M$ has timelike Ricci curvature bounded from below by $\kappa$
  if and only if for any timelike vector $v \in TM$,
  we have $\Ric(v, v) \geq - n \cdot \kappa \SP{v, v}$.
  The condition $\Ric(v, v) \geq 0$ for all timelike vectors $v \in TM$
  is also called the \emph{timelike convergence condition\/}
  or the \emph{strong energy condition\/} (cf.~\cite[pp.~95]{HE}).
  \item Recall that
  $\Ric(v, v) = \SP{v, v} \cdot \sum_{i=1}^n K(v, e_i)$,
  where $e_1, \ldots, e_n \in v^\bot$ is an orthonormal
  basis and $K(v, e_i)$ is the sectional curvature of the plane
  spanned by $v$ and $e_i$. This shows that a \emph{lower\/}
  bound on sectional curvature implies an \emph{upper\/} bound on
  timelike Ricci curvature.
\item If $(M,\g)$ is a Friedmann--Robertson--Walker spacetime,
then $\beta$ can be related to the
\emph{Hubble parameter}, i.e. the rate of acceleration of the
    universe (cf., e.g.,~\cite[pp.~433]{ONeill}). This is the reason
    for the choice of terminology in
    Def.~\ref{def:cosmological-curvature-condition}.
\end{compactenum}
\end{remark}

\subsection{Comparison Geometries}
\label{sec:comp-geom}

Here we construct certain globally hyperbolic Lorentzian manifolds
where the inequalities in the $\CCC(\kappa, \beta)$ condition become
equalities. This will lead to a suitable family of comparison spaces.

Our comparison geometries are warped products of the following form.
Let $(a, b) \subset \R$ be an interval,
$(N, \h)$ an $n$-dimensional Riemannian manifold,
and $f \in \Cinfty((a, b))$ a smooth, positive function.
We consider the Lorentzian warped product $(M, \g)$,
where $M = (a, b) \times N$ and $\g$ is given by
\[
\g = - \D t^2 + f(t)^2 \h \,.
\]
We choose the time-orientation such that $\partial_t$ is future-directed.
We take $(N, \h)$ to be complete in order that
$(M, \g)$ be globally hyperbolic (cf.~\cite[Sec.~3.6]{BEE}).
In this case, for each $t \in (a, b)$ the hypersurface
$N_t := \{t\} \times N \subset M$ is a smooth,
spacelike Cauchy hypersurface.
In particular, it is acausal and causally complete.

In order to satisfy the lower Ricci curvature bound in
Def.~\ref{def:cosmological-curvature-condition}, we will construct
comparison spaces that are Einstein, i.e. satisfy $\Ric = - n \kappa
\, \g$.%
\footnote{The negative sign appears since we want $\Ric(v, v) = n
  \kappa$ for $\SP{v, v} = -1$. (Compare Remark~\ref{rem:timelike-curvature-bounds} (1).)} A standard curvature
calculation implies that this holds if and only if $(N, \h)$ is
Einstein with $\Ric_N = (n-1) \kappa_N \h$ and the warping function
satisfies \bel{eq:einstein-warping-function} f'' = - \kappa \cdot f
\quad \textrm{and} \quad \left( f' \right)^2 + \kappa_N = f \cdot f''
\,. \ee For each $\kappa \in \R$ and given initial conditions, there
exists a unique maximal solution of the left equation. Separately,
for each $\kappa_N \in \R$ and given initial conditions, there is a
unique maximal solution of the right equation. For certain values of
$\kappa, \kappa_N \in \R$, these solutions coincide if the initial
conditions are chosen appropriately (see
Table~\ref{tab:warping-functions-for-einstein-metrics}). The two
missing cases $\kappa = 0$, $\kappa_N > 0$ and $\kappa > 0$,
$\kappa_N \geq 0$ cannot be matched.
\begin{table}
  \centering
  \begin{tabular}{llcll}
    $\kappa < 0$ & $\kappa_N > 0$ & \hspace{0.5cm} &
    $f(t) = \sqrt{\kappa_N/|\kappa|} \cosh(\sqrt{|\kappa|}t+b)$
    &\quad
    $H_t = n \sqrt{|\kappa|} \tanh(\sqrt{|\kappa|}t+b)$
    \\
    $\kappa < 0$ & $\kappa_N = 0$ & &
    $f(t) = \E^{\pm \sqrt{|\kappa|}t}$
    &\quad
    $H_t = \pm n \sqrt{|\kappa|}$ \\
    $\kappa < 0$ & $\kappa_N < 0$ & &
    $f(t) = \sqrt{|\kappa_N|/|\kappa|} \sinh(\sqrt{|\kappa|}t+b)$
    &\quad
    $H_t = n \sqrt{|\kappa|} \coth(\sqrt{|\kappa|}t+b)$
    \\ \midrule
    $\kappa = 0$ & $\kappa_N = 0$ & &
    $f(t) = \E^b = \textrm{const.}$
    &\quad
    $H_t = 0$ \\
    $\kappa = 0$ & $\kappa_N < 0$ & &
    $f(t) = \pm \sqrt{|\kappa_N|}t+b$
    &\quad
    $H_t = n/(t \pm b/\sqrt{|\kappa_N|})$
    \\ \midrule
    $\kappa > 0$ & $\kappa_N < 0$ & &
    $f(t) = \sqrt{|\kappa_N|/\kappa} \sin(\sqrt{\kappa}t+b)$
    &\quad
    $H_t = n \sqrt{\kappa} \cot(\sqrt{\kappa}t+b)$
    \\
    &&&
  \end{tabular}
  \caption{\label{tab:warping-functions-for-einstein-metrics}
   Warping functions that yield Einstein metrics. $f$ solves the
   system~\eqref{eq:einstein-warping-function}. $H_t = n
   f'(t)/f(t)$ is the (spatially constant) mean curvature of $N_t
   \subset M$.}
\end{table}
Note that, rescaling $f$ if necessary, then, without loss of generality,
we need only consider the cases $\kappa_N = 0, \pm 1$.

Regarding the second part of
Definition~\ref{def:cosmological-curvature-condition}, we note that
for each $t \in (a, b)$ the spacelike hypersurface $N_t \subset M$ is
totally umbilic and its shape operator with respect to $\partial_t$ is
given by $S_t = (f'(t)/f(t)) \id_{TN_t}$. Consequently, the
corresponding mean curvature is constant on each $N_t$ and given by
$H_t = \tr S_t = n \cdot f'(t)/f(t)$.
From~\eqref{eq:einstein-warping-function}, it follows directly that
the shape operators satisfy the Riccati equation
\[
S_t' + S_t^2 + \kappa \cdot \id_{TS_t} = 0 \,.
\]

\vs
We now concretely define our comparison geometries.
Let $\kappa, \beta \in \R$ be given.
From Table~\ref{tab:warping-functions-for-einstein-metrics},
one sees that there is a unique way of choosing $\kappa_N = 0, \pm 1$ and a solution
$f_{\kappa, \beta} \colon (a_{\kappa, \beta}, b_{\kappa, \beta}) \to \R$
of~\eqref{eq:einstein-warping-function} such that
$H_0 = n \cdot f_{\kappa, \beta}'(0)/f_{\kappa, \beta}(0) = \beta$.
Here $(a_{\kappa, \beta}, b_{\kappa, \beta}) \subseteq \R$ is
chosen to be the maximal interval containing $t=0$ on which
$f_{\kappa, \beta}$ remains strictly positive. Further, we denote by
$(N^n_{\kappa, \beta}, \h_{\kappa, \beta})$ the unique $n$-dimensional,
simply-connected space form of constant sectional curvature
$\kappa_N = 0, \pm 1$ as determined by $\kappa, \beta$.

\begin{definition}
    Given $\kappa, \beta \in \R$, we denote by $(M^{n+1}_{\kappa,
      \beta}, \g_{\kappa, \beta})$ the warped product
    \[
    M^{n+1}_{\kappa, \beta}
    := (a_{\kappa, \beta}, b_{\kappa, \beta}) \times
    N^n_{\kappa, \beta} \,,
    \qquad
    \g_{\kappa, \beta}
    := - \D t^2 + f_{\kappa, \beta}(t)^2 \h_{\kappa, \beta} \,,
    \]
    where $(N^n_{\kappa, \beta}, \h_{\kappa, \beta})$ and $f_{\kappa,
      \beta} \colon (a_{\kappa,\beta}, b_{\kappa, \beta}) \to \R$ are
    as described above. We set $\Sigma_{\kappa, \beta} := \{0\} \times
    N^n_{\kappa, \beta}$. Then $\Sigma_{\kappa, \beta} \subset
    M^{n+1}_{\kappa, \beta}$ is a smooth, spacelike, acausal, causally
    complete hypersurface of constant mean curvature $\beta$. Thus,
    for the pair $(M^{n+1}_{\kappa, \beta}, \Sigma_{\kappa, \beta})$,
    the $\CCC(\kappa, \beta)$ condition is sharp.
\end{definition}

By construction, the signed time-separation $\tau_{\kappa, \beta}
\colon M^{n+1}_{\kappa, \beta} \to \R$ of $\Sigma_{\kappa, \beta}$
agrees with the function $t:= \textrm{pr}_1 \colon M \to (a_{\kappa, \beta}, b_{\kappa, \beta})$. In particular, the future-directed
maximizing geodesics emanating from $\Sigma_{\kappa, \beta}$ are given
by the integral curves of $\partial_t = - \gradient t$. Since
integral curves do not cross, it follows that every point in
$I^+(\Sigma_{\kappa, \beta})$ is connected to $\Sigma_{\kappa, \beta}$
by a unique maximizing geodesic. Consequently, we deduce that
${\Cut}^+(\Sigma_{\kappa, \beta}) = \emptyset$. Further, from
$\tau_{\kappa, \beta} = t$, it follows that the future spheres around
$\Sigma_{\kappa, \beta}$ are the sets $S^+_{\kappa, \beta}(t) =
\Scal^+_{\kappa, \beta}(t) = \{ t \} \times N^n_{\kappa, \beta}$. As
noted previously, these hypersurfaces have constant mean curvature
$H_{\kappa, \beta}(t) = H_t = - \dAlembert{}_{\kappa, \beta}
\tau_{\kappa, \beta} |_{S^\pm_{\kappa, \beta}(t)}$ w.r.t $\partial_t$
(see Table~\ref{tab:warping-functions-for-einstein-metrics} for
$H_t$). It follows from the variation of area
formula~\eqref{eq:first-variation-of-area} that, for $B \subseteq
\Sigma_{\kappa, \beta}$, we have \bel{eq:area-in-comparison-space}
\area_{\kappa, \beta} S^+_B(t) = \frac{\area_{\kappa, \beta}
  B}{f_{\kappa, \beta}(0)^n} \cdot f_{\kappa, \beta}(t)^n \,.  \ee The
volumes of future balls are obtained by integrating this equation via
the coarea formula.

\subsection{d'Alembertian and Mean Curvature Comparison}
\label{subsec:dalembertian-mean-curvature-comparison}

We now prove the first comparison theorem. In the following
statements, quantities labelled with indices $\kappa, \beta$ belong to the
comparison geometries $(M^{n+1}_{\kappa, \beta}, \g_{\kappa, \beta})$
introduced in Sec.~\ref{sec:comp-geom}.

\begin{theorem}
  \label{theorem:dAlembertian-comparison}
  Let $\kappa, \beta \in \R$ and assume that $M$ and
  $\Sigma \subset M$ satisfy the $\CCC(\kappa, \beta)$.
  Then, for each $q \in \Ical^+(\Sigma)$,
  we have $\tau_\Sigma(q) < b_{\kappa, \beta}$ and
  \bel{eq:dAlembertian-comparison-hypersurface}
    \H_{\tau_\Sigma(q)}(q)
    = - \dAlembert \tau_\Sigma(q)
    \leq - \dAlembert{}_{\kappa, \beta}
    \tau_{\kappa, \beta}|_{S_{\kappa, \beta}(\tau_\Sigma(q))}
    = \H_{\kappa, \beta}(\tau_\Sigma(q)) \,.
  \ee
\end{theorem}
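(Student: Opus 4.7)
The plan is to reduce the desired mean-curvature bound to a scalar Riccati comparison along the unique maximizing geodesic from $\Sigma$ to $q$. Fix $q \in \Ical^+(\Sigma)$ and let $\gamma \colon [0, \tau_\Sigma(q)] \to M$ be the unit-speed, future-directed, timelike maximizing geodesic provided by Proposition~\ref{proposition:signed-time-separation}, so that $\gammad(t) = \n|_{\gamma(t)}$. The endomorphism $S$ of Theorem~\ref{theorem:shape-operator-riccati-equation} restricts at each point to a self-adjoint endomorphism of $\gammad(t)^\perp$ (it preserves $\gammad^\perp$ because $\langle \nabla_X \n, \n \rangle = 0$ for unit $\n$). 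Parallel transport along $\gamma$ then yields a smooth family $S(t) \in \mathsf{S}(E)$ on a fixed $n$-dimensional Euclidean space $E \cong \gammad(0)^\perp$ that satisfies the abstract Riccati equation $S' + S^2 + R = 0$, where $R(t)$ is the parallel transport of $R_\n|_{\gamma(t)}$.

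Next I would set up the scalar comparison. The function $s_\kappa(t) := f_{\kappa, \beta}'(t) / f_{\kappa, \beta}(t) = H_{\kappa, \beta}(t)/n$ satisfies $s_\kappa' + s_\kappa^2 + \kappa = 0$ on its maximal interval of existence $(a_{\kappa, \beta}, b_{\kappa, \beta})$, with initial value $s_\kappa(0) = \beta/n$. The two parts of the $\CCC(\kappa, \beta)$ condition translate into exactly the hypotheses required by Theorem~\ref{theorem:scalar-riccati-comparison}: first, $\tr R(t) = \Ric(\gammad(t), \gammad(t)) \geq n \kappa$, since $\gammad$ is a unit timelike vector; and second, $\tr S(0) = \H(\gamma(0)) \leq \beta$, because $S(0)$ is by construction the shape operator of $\Sigma$ at $\gamma(0)$ with respect to $\n$. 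Consequently, $\lim_{t \searrow 0} \left( s_\kappa(t) - \tr S(t)/n \right) = (\beta - \H(\gamma(0)))/n \geq 0$.

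Applying Theorem~\ref{theorem:scalar-riccati-comparison} then gives that the maximal time of existence $b$ of $S$ satisfies $b \leq b_{\kappa, \beta}$ and that $\tr S(t) \leq n \cdot s_\kappa(t)$ on $(0, b)$. Because $\Ical^+(\Sigma)$ is open by Theorem~\ref{theorem:largest-normal-neighborhood}(1) and contains $q = \gamma(\tau_\Sigma(q))$, the geodesic $\gamma$ can be prolonged within $\Ical^+(\Sigma)$ past $\tau_\Sigma(q)$, and on $\Ical^+(\Sigma)$ the Riccati solution agrees with the smooth shape operator of the level sets of $\tau_\Sigma$; hence $S$ extends smoothly past $t = \tau_\Sigma(q)$, giving the strict inequality $\tau_\Sigma(q) < b \leq b_{\kappa, \beta}$. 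Evaluating the traced inequality at $t = \tau_\Sigma(q)$ and using~\eqref{eq:dAlembertian-and-mean-curvature} together with $n \cdot s_\kappa(t) = H_{\kappa, \beta}(t)$ produces the desired estimate~\eqref{eq:dAlembertian-comparison-hypersurface}.

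The most delicate step is the initial condition: one must verify that the abstract Riccati solution $S(t)$ extends continuously down to $t = 0$ and that its limiting value coincides with the shape operator of $\Sigma$ at $\gamma(0)$, so that the mean curvature bound from $\CCC(\kappa, \beta)$ properly feeds into the hypothesis of Theorem~\ref{theorem:scalar-riccati-comparison}. This is where Proposition~\ref{proposition:signed-time-separation}(2) is essential: it guarantees that $\n = -\gradient \tau_\Sigma$ admits a smooth extension across $\Sigma$ as the future unit normal, so that the formula $S(X) = \nabla_X \n$ from Section~\ref{sec:distance-function} yields a bona fide continuous extension of $S$ to $t = 0$ whose restriction to $T_{\gamma(0)} \Sigma$ is the usual shape operator of $\Sigma$.
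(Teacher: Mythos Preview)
Your proof is correct and follows essentially the same route as the paper: reduce to the Riccati equation along the maximizing geodesic via a parallel orthonormal trivialization of $\gammad^\perp$, then feed the $\CCC(\kappa,\beta)$ hypotheses into Theorem~\ref{theorem:scalar-riccati-comparison}. Your additional care with the strict inequality $\tau_\Sigma(q) < b_{\kappa,\beta}$ (using openness of $\Ical^+(\Sigma)$) and with the continuous extension of $S$ to $t=0$ via Proposition~\ref{proposition:signed-time-separation}(2) makes explicit two points the paper glosses over.
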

\begin{proof}
    As noted previously, we have $\H_{\tau_\Sigma(q)}(q) = \tr S|_q =
    - \dAlembert \tau_\Sigma(q)$, and \bel{star} \nabla_{\n} S + S^2 +
    R_{\n} = 0 \,, \ee where $R_{\n} = \Rm(\cdot, \n) \n$.

    Fix $q \in \Ical^+(\Sigma)$ and let $\gamma \colon [0,
    \tau_\Sigma(q)] \to M$ be the unique maximizing geodesic from
    $\Sigma$ to $q$, parametrized to unit-speed. Denote by
    $\gamma^\bot \to [0, \tau_\Sigma(q)]$ the normal bundle of
    $\gamma$, i.e. $\gamma^\bot_t = \dot{\gamma}(t)^\bot \subset
    T_{\gamma(t)}M$ for all $t \in [0, \tau_\Sigma(q)]$. Choose a
    parallel orthonormal frame $e_1, \ldots, e_n \in
    \Secinfty(\gamma^\bot)$ and let $e^1, \ldots, e^n \in
    \Secinfty((\gamma^\bot)^*)$ be the dual coframe. One can show
    that $\gamma^*S$ and $\gamma^*R_{\n}$ take values in
    $\End(\gamma^\bot)$. Therefore, we may write $\gamma^*S =
    \Scal^i_j (e_i \otimes e^j)$ and $\gamma^*R_{\n} =
    \mathcal{R}^i_j (e_i \otimes e^j)$ for smooth functions
    $\Scal^i_j, \mathcal{R}^i_j \colon [0, \tau_\Sigma(q)] \to \R$.
    Since the frames were chosen orthonormal, and both
    $R_{\mathfrak{n}}$ and $S$ are (pointwise) self-adjoint w.r.t
    $\g$, the maps $\Scal = (\Scal^i_j), \mathcal{R} =
    (\mathcal{R}^i_j) \colon [0, \tau_\Sigma(q)] \to \End(\R^n)$ are
    (pointwise) self-adjoint w.r.t. the Euclidean inner product on
    $\R^n$. Further, since $\gamma$ is an integral curve of $\n$
    (Proposition~\ref{proposition:signed-time-separation}),
    \eqref{star} implies that $\Scal' + \Scal^2 + \mathcal{R} = 0$.
    By the $\CCC(\kappa, \beta)$-assumption, we have
    \[
    \tr \mathcal{R}(t)
    = \tr\{ \Rm(\cdot, \dot{\gamma}(t))\dot{\gamma}(t) \}
    = \Ric(\dot{\gamma}(t), \dot{\gamma}(t))
    \geq n \cdot \kappa
    \]
    and
    \[
    \tr \Scal(0)
    = \tr S|_{\gamma(0)}
    = \H(\gamma(0))
    \leq \beta \,.
    \]
    On the other hand, $s_{\kappa, \beta} := \frac{1}{n} H_{\kappa,
      \beta} \colon (0, b_{\kappa, \beta}) \to \R$ satisfies the
    scalar Riccati equation $s_{\kappa, \beta}' + s_{\kappa, \beta}^2
    + \kappa = 0$ with initial conditions $s_{\kappa, \beta}(0) =
    \beta$, and cannot be extended beyond $b_{\kappa, \beta}$.
    Therefore, we can apply the scalar Riccati comparison
    theorem~\ref{theorem:scalar-riccati-comparison}, and obtain
    $\tau_\Sigma(q) < b_{\kappa, \beta}$ and $\tr S(\gamma(t)) \leq
    H_{\kappa, \beta}(t)$ for all $t \in (0, \tau_\Sigma(q)]$.
    Setting $t= \tau_\Sigma(q)$
    gives~\eqref{eq:dAlembertian-comparison-hypersurface}.
\end{proof}

\subsection{Area Comparison}
\label{subsec:some-lorentzian-area-comparison}

We now use the d'Alembertian comparison, together with the variation
of area formula~\eqref{eq:first-variation-of-area} and the coarea
formula~\eqref{eq:coarea-formula}, to obtain comparison statements
for areas and volumes of future spheres and balls.

\begin{theorem}
\label{theorem:lorentzian-area-comparison-to-hypersurface}
Let $\kappa, \beta \in \R$ and assume that $M$ and $\Sigma \subset M$
satisfy the $\CCC(\kappa, \beta)$. Then, for any $A \subseteq \Sigma$
and $B \subseteq \Sigma_{\kappa, \beta}$, the function
\[
t \mapsto \frac{\area \Scal^+_A(t)}{\area_{\kappa, \beta} S^+_B(t)},
\qquad t \in [0, b_{\kappa, \beta})
\]
is nonincreasing. Further, for $\tau \searrow 0$, this ratio
converges to $\area A / \area_{\kappa, \beta} B$, so we also have
\[
\area \Scal^+_A(t)
\leq \frac{\area A}{\area_{\kappa, \beta} B} \cdot \area_{\kappa, \beta} S^+_{B}(t)\, ,
\]
for all $t \in [0, b_{\kappa, \beta})$.
\end{theorem}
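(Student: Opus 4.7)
The plan is to pull the hypersurface area back to $\Sigma$ via the normal exponential map and reduce the theorem to a pointwise comparison of Jacobians, driven by Theorem~\ref{theorem:dAlembertian-comparison}. By Theorem~\ref{theorem:largest-normal-neighborhood}, $\exp_\Sigma$ restricts to a diffeomorphism from $\Ical^+_T(\Sigma)$ onto $\Ical^+(\Sigma)$, so setting $\gamma_p(t) := \exp_\Sigma(t \n_p)$ and $A_t := \{ p \in A \mid t < s^+_\Sigma(\n_p) \}$, one has $\Scal^+_A(t) = \{ \gamma_p(t) \mid p \in A_t\}$, with $A_{t_2} \subseteq A_{t_1}$ for $t_1 \le t_2$. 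The pullback of the induced area form on $\Scal^+_\Sigma(t)$ under $p \mapsto \gamma_p(t)$ has the form $J(p,t)\, \D\mu_\Sigma(p)$ for a positive Jacobian $J$ with $J(p,0)=1$, and the first variation of area formula (Proposition~\ref{proposition:first-variation-of-area}) applied infinitesimally along the flow of $\n$ gives the logarithmic derivative
\[
\frac{\partial}{\partial t} \log J(p,t) = \H_t(\gamma_p(t)).
\]

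Next, I would invoke Theorem~\ref{theorem:dAlembertian-comparison} pointwise along each normal geodesic: under $\CCC(\kappa, \beta)$, $\H_t(\gamma_p(t)) \le \H_{\kappa, \beta}(t)$ for every $p \in A_t$. In the comparison geometry, formula~\eqref{eq:area-in-comparison-space} identifies the corresponding Jacobian as $J_{\kappa,\beta}(t) = (f_{\kappa,\beta}(t) / f_{\kappa,\beta}(0))^n$, whose logarithmic derivative is precisely $\H_{\kappa,\beta}(t)$. Therefore $t \mapsto J(p,t)/J_{\kappa,\beta}(t)$ is nonincreasing on its interval of definition, and for $0 \le t_1 \le t_2 < b_{\kappa,\beta}$ we get $J(p,t_2) \le (J_{\kappa,\beta}(t_2)/J_{\kappa,\beta}(t_1))\, J(p,t_1)$ for all $p \in A_{t_2}$. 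Integrating over $A_{t_2}$ and enlarging the domain to $A_{t_1}$ yields
\[
\area \Scal^+_A(t_2) \le \frac{J_{\kappa,\beta}(t_2)}{J_{\kappa,\beta}(t_1)}\, \area \Scal^+_A(t_1) = \frac{\area_{\kappa,\beta} S^+_B(t_2)}{\area_{\kappa,\beta} S^+_B(t_1)} \cdot \area \Scal^+_A(t_1),
\]
which is the claimed monotonicity of the ratio.

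Finally, the limit $t \searrow 0$ is handled by observing that $A_t \nearrow A$, $J(\,\cdot\,,t) \to 1$, and the pointwise bound $J(p,t) \le J_{\kappa,\beta}(t)$ provides a dominating function for dominated (or monotone) convergence, giving $\area \Scal^+_A(t) \to \area A$ and likewise in the model, so the ratio tends to $\area A / \area_{\kappa,\beta} B$. The pointwise area inequality then follows immediately from the monotonicity and this initial value. The main technical point will be setting up the Jacobian $J(p,t)$ cleanly and justifying $\partial_t \log J = \H_t$ — which is essentially a first-variation calculation, consistent with Theorem~\ref{theorem:shape-operator-riccati-equation} — together with a careful treatment of the shrinking domain $A_t$; once this bookkeeping is in place, all the analytic content of the comparison has already been delivered by Theorems~\ref{theorem:scalar-riccati-comparison} and~\ref{theorem:dAlembertian-comparison}.
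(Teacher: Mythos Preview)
Your argument is correct and follows essentially the same route as the paper: both use Theorem~\ref{theorem:dAlembertian-comparison} to bound the logarithmic derivative of area by $\H_{\kappa,\beta}(t)$ and then integrate. The paper organises this slightly differently --- rather than pulling back to $\Sigma$ and comparing Jacobians pointwise, it takes a compact exhaustion $K_i \nearrow \Scal^+_A(t_2)$, flows each $K_i$ backward along $\n$, and applies Proposition~\ref{proposition:first-variation-of-area} directly to the integrated quantity $\area K_i(t)$. The compact-exhaustion route has one small advantage: it cleanly handles the measurability caveat flagged in Remark~\ref{remark:measurability}. For arbitrary $A \subseteq \Sigma$ the sets $A_t$ need not be measurable, so your integral $\int_{A_t} J(p,t)\,\D\mu_\Sigma$ is not a priori defined, whereas the paper's proof works verbatim with $\area$ interpreted as inner measure. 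Your pointwise Jacobian formulation is the standard Bishop--Gromov style and is equally valid once $A$ is assumed measurable (or once you reinterpret your integrals via inner regularity as well).
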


\begin{remark}
\label{remark:measurability}
For general $A \subseteq \Sigma$, the sets $\Scal^+_A(t) \subseteq
\Scal^+(\Sigma, t)$ may not be measurable. In this case, $\area
\Scal^+_A(t)$ should be understood as the inner measure given by
$\sup_K \left( \area K \right)$, where the supremum is taken over all
compact sets $K \subset \Scal^+_A(t)$. (This will be clear from the
proof below.) Since Riemannian measures are Radon measures, this
gives the correct result in the measurable case. The same remark
applies in the following statements.
\end{remark}

\begin{proof}
 Let $0 < t_1 < t_2 < b_{\kappa, \beta}$. Choose a sequence of
 compact sets $K_i \subset \Scal^+_A(t_2)$ with $\area K_i
 \nearrow \area \Scal^+_A(t_2)$. Each point in
 $\Scal^+_A(t_2)$ can be reached from $\Sigma$ by a unique
 maximizing, future-directed unit-speed geodesic. Since these
 geodesics are integral curves of $\n = -\gradient \tau_\Sigma$
 (Proposition~\ref{proposition:signed-time-separation}), we have
 \[
 K_i(t)
 := \Phi_{t-t_2}(K_i)
 \subset \Scal^+_A(t) \,,
 \]
 where $\Phi$ is the flow of $\n$. Further, for each $i \in
 \mathbb{N}$ and each $t \in [0, t_2]$, $K_i(t) \subset
 \Scal^+_A(t)$ is compact and $\Phi$ is defined on $(-t,
 t_2-t) \times K_i(t)$. Therefore, we may use the variation of area
 formula~\eqref{eq:first-variation-of-area},
 and~\eqref{eq:dAlembertian-comparison-hypersurface}, giving
 \[
  \frac{\D}{\D t} \log \left( \area K_i(t) \right)
  = \frac{1}{\area K_i(t)} \int_{K_i(t)} \H_t(q) d \mu_t(q)
  \leq H_{\kappa, \beta}(t)
  \stackrel{~\eqref{eq:area-in-comparison-space}}{=}
  \frac{\D}{\D t} \log \area_{\kappa, \beta} S^+_{B}(t)\,.
 \]
 This shows that the function $t \mapsto \area K_i(t) /
 \area_{\kappa, \beta} S^+_{B}(t)$ is nonincreasing on $[0, t_2]$. Hence
 \[
   \frac{\area K_i(t_2)}{\area_{\kappa, \beta} S^+_{B}(t_2)}
  \leq \frac{\area K_i(t_1)}{\area_{\kappa, \beta} S^+_{B}(t_1)}
  \leq
  \frac{\area \Scal^+_A(t_1)}
  {\area_{\kappa, \beta} S^+_{B}(t_1)}
  \,,
  \]
  where the final inequality is simply due to the inclusion $K_i(t_1)
  \subseteq \Scal^+_A(t_1)$. For $i \to \infty$, this yields
  \[
  \frac{\area \Scal^+_A(t_2)}
  {\area_{\kappa, \beta} S^+_{B}(t_2)}
  \leq
  \frac{\area \Scal^+_A(t_1)}
  {\area_{\kappa, \beta} S^+_{B}(t_1)} \,.
  \]
  This shows monotonicity. The second assertion is clear.
\end{proof}

A special case of this result is the following.
\begin{corollary}
    Let $\kappa, \beta \in \R$ and assume that $M$ and $\Sigma
    \subset M$ satisfy the $\CCC(\kappa, \beta)$. Let $A \subseteq
    \Sigma$ and $B \subseteq \Sigma_{\kappa, \beta}$ with the
    property that $\area_{\kappa, \beta} B = \area A$. Then, the
    function
    \[
    t \mapsto \frac{\area \Scal^+_A(t)}{\area_{\kappa, \beta} S^+_B(t)}, \qquad t \in [0, b_{\kappa, \beta})
    \]
    is nonincreasing, and converges to $1$ as $t \searrow 0$. Therefore,
    \[
    \area \Scal^+_A(t) \leq \area_{\kappa, \beta} S^+_{B}(t), \qquad
    t \in [0, b_{\kappa, \beta}).
    \]
\end{corollary}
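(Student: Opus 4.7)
The plan is to note that this corollary is essentially an immediate specialization of the preceding Theorem~\ref{theorem:lorentzian-area-comparison-to-hypersurface}, applied under the additional hypothesis $\area A = \area_{\kappa,\beta} B$. Since the previous theorem has already established both the monotonicity of the ratio and the value of its limit as $t \searrow 0$ for arbitrary $A$ and $B$, nothing new needs to be proved about the evolution of areas; the only task is to combine these two facts correctly.

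More concretely, I would first invoke Theorem~\ref{theorem:lorentzian-area-comparison-to-hypersurface} directly to obtain that $t \mapsto \area \Scal^+_A(t) / \area_{\kappa,\beta} S^+_B(t)$ is nonincreasing on $[0, b_{\kappa,\beta})$, and that this ratio tends to $\area A / \area_{\kappa,\beta} B$ as $t \searrow 0$. Under the standing assumption $\area A = \area_{\kappa,\beta} B$, this limit equals $1$, which gives the second claim of the corollary. The inequality $\area \Scal^+_A(t) \leq \area_{\kappa,\beta} S^+_B(t)$ for all $t \in [0, b_{\kappa,\beta})$ then follows immediately from monotonicity together with the fact that the ratio starts at $1$.

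Since the argument is this short, there is essentially no obstacle; the only point requiring a moment of care is that $\area_{\kappa,\beta} B > 0$, so that the ratio is well-defined, which I would address by reducing to the case $\area A > 0$ (the assertion being trivial when $A$ has zero area, as then $\Scal^+_A(t)$ has zero area for all $t$ by the area variation formula~\eqref{eq:first-variation-of-area} applied along the flow of $\n$). The measurability caveat of Remark~\ref{remark:measurability} carries over verbatim, with areas understood as inner measures via compact exhaustion, exactly as in the proof of Theorem~\ref{theorem:lorentzian-area-comparison-to-hypersurface}.
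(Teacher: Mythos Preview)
Your proposal is correct and matches the paper's approach exactly: the paper presents this corollary simply as ``a special case'' of Theorem~\ref{theorem:lorentzian-area-comparison-to-hypersurface} with no further argument, and your derivation---invoke the theorem, specialize the limit $\area A / \area_{\kappa,\beta} B$ to $1$, and read off the inequality from monotonicity---is precisely that specialization. Your additional remarks on the degenerate case $\area A = 0$ and on measurability are harmless elaborations that the paper does not bother to spell out.
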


\subsection{Volume Comparison}
\label{subsec:some-lorentzian-volume-comparison}

Using the coarea formula~\eqref{eq:coarea-formula} and the following
Lemma, the area comparison theorem immediately yields a volume
comparison result.
\begin{lemma}
  \label{lemma:gromov-lemma}
  Let $f, g \colon [a, b) \to [0, \infty)$ be locally integrable,
  nonzero on $(a, b)$, and assume that $f/g$ is non-increasing on
  $(a, b)$. Then the functions $F, G \colon (a, b) \to (0, \infty)$, defined by
  \[
  F(x) = \int_a^x f(y) \D y
  \qquad \textrm{and} \qquad
  G(x) = \int_a^x g(y) \D y \,,
  \]
  are continuous, and $F/G$ is also non-increasing on $(a, b)$.
\end{lemma}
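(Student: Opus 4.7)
\medskip

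\noindent\textbf{Proof plan.} The continuity of $F$ and $G$ is immediate: since $f$ and $g$ are locally integrable and nonnegative, $F$ and $G$ are absolutely continuous on every compact subinterval of $[a, b)$ (by dominated convergence, $F(x + h) - F(x) = \int_x^{x+h} f \to 0$ as $h \to 0$, and similarly for $G$). Note also that, since $g > 0$ on $(a, b)$ and is locally integrable, $G(x) > 0$ for every $x \in (a, b)$, so that $F/G$ is a well-defined continuous function on $(a,b)$.

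The substantive part is the monotonicity of $F/G$. Fix $a < x_1 < x_2 < b$. The inequality $F(x_2)/G(x_2) \leq F(x_1)/G(x_1)$, after clearing the positive denominators and substituting $F(x_2) = F(x_1) + \int_{x_1}^{x_2} f$ and $G(x_2) = G(x_1) + \int_{x_1}^{x_2} g$, is algebraically equivalent to
\[
\frac{\int_{x_1}^{x_2} f(y) \D y}{\int_{x_1}^{x_2} g(y) \D y}
\;\leq\;
\frac{F(x_1)}{G(x_1)}.
\]
Thus the whole lemma reduces to proving this single inequality.

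To prove it, set $\phi := f/g$ on $(a, b)$, which by hypothesis is non-increasing. The key observation is that $\phi(x_1)$ separates the two ratios above. Indeed, on $(a, x_1)$ we have $\phi(y) \geq \phi(x_1)$, so $f(y) = \phi(y) g(y) \geq \phi(x_1) g(y)$; integrating over $(a, x_1)$ yields $F(x_1) \geq \phi(x_1) G(x_1)$, i.e.\ $F(x_1)/G(x_1) \geq \phi(x_1)$. Conversely, on $(x_1, x_2)$ we have $\phi(y) \leq \phi(x_1)$, so $\int_{x_1}^{x_2} f \leq \phi(x_1) \int_{x_1}^{x_2} g$, i.e.\ the left-hand ratio is at most $\phi(x_1)$. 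Chaining these two bounds through $\phi(x_1)$ finishes the argument.

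There is no real obstacle here; the only subtlety is keeping track of the positivity assumptions ($g > 0$ pointwise on $(a,b)$ combined with local integrability ensures $G(x) > 0$ for all $x \in (a,b)$, so no division by zero occurs in the reduction step). The argument is a streamlined version of the classical monotonicity lemma used in Bishop--Gromov volume comparison, transplanted here to the Lorentzian setting via Theorem~\ref{theorem:lorentzian-area-comparison-to-hypersurface} and the coarea formula~\eqref{eq:coarea-formula}.
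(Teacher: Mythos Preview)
Your argument is correct and is precisely the classical proof of this monotonicity lemma. The paper itself does not give the argument: it records the continuity of $F$ and $G$ and then refers the reader to~\cite[pp.~42]{CGT} for the monotonicity, so your write-up in fact supplies the details the paper omits, using the same standard approach.
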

\begin{proof}
 Since $f$ and $g$ are locally integrable, $F$ and $G$ are
 well-defined and continuous. The rest of the proof may be found
 in~\cite[pp.~42]{CGT}.
\end{proof}

\begin{theorem}
  \label{theorem:lorentzian-volume-comparison-to-hypersurface}
  Let $\kappa, \beta \in \R$ and assume that $M$ and $\Sigma \subset
  M$ satisfy the $\CCC(\kappa, \beta)$. Then, for any $A \subseteq
  \Sigma$ and $B \subseteq \Sigma_{\kappa, \beta}$, the function
  \[
  t \mapsto \frac{\vol B^+_A(t)} {\vol_{\kappa, \beta} B^+_{B}(t)} ,
  \qquad t \in [0, b_{\kappa, \beta})
  \]
  is nonincreasing. Further, for $t \searrow 0$, this ratio converges
  to $\area A / \area_{\kappa, \beta} B$, so we also have
  \[
  \vol B^+_A(t) \leq \frac{\area A}{\area_{\kappa, \beta} B} \cdot \vol_{\kappa, \beta} B^+_{B}(t)
  \]
  for all $t \in [0, b_{\kappa, \beta})$.
\end{theorem}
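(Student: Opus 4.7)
The plan is to reduce the volume statement to the area comparison Theorem~\ref{theorem:lorentzian-area-comparison-to-hypersurface} via the coarea formula and the elementary monotonicity Lemma~\ref{lemma:gromov-lemma}. Since $\tau_\Sigma$ is a smooth timelike distance function on $\Ical^+(\Sigma)$, applying Proposition~\ref{proposition:coarea-formula} on the open set $\Ical^+(\Sigma)$ to the indicator of $\mathcal{B}^+_A(t)$ gives
\[
\vol \mathcal{B}^+_A(t) = \int_0^t \area \Scal^+_A(s)\,\D s,
\qquad
\vol_{\kappa,\beta} B^+_B(t) = \int_0^t \area_{\kappa,\beta} S^+_B(s)\,\D s,
\]
where in the model spacetime I have used that $\Cut^+(\Sigma_{\kappa,\beta}) = \emptyset$. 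Because $\Cut^+(\Sigma)$ has measure zero by Theorem~\ref{theorem:largest-normal-neighborhood}, we also have $\vol B^+_A(t) = \vol \mathcal{B}^+_A(t)$, so the volume ratio in the statement is literally the ratio of the two integrals above.

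Next, Theorem~\ref{theorem:lorentzian-area-comparison-to-hypersurface} states precisely that the ratio of the two integrands is nonincreasing on $[0, b_{\kappa,\beta})$. Writing $f(s) = \area \Scal^+_A(s)$ and $g(s) = \area_{\kappa,\beta} S^+_B(s)$, both are continuous and nonnegative on $(0, b_{\kappa,\beta})$, so Lemma~\ref{lemma:gromov-lemma} applies and yields monotonicity of the ratio of integrals, which is the first assertion.

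For the limit as $t \searrow 0$, Theorem~\ref{theorem:lorentzian-area-comparison-to-hypersurface} also gives $f(s)/g(s) \to L := \area A/\area_{\kappa,\beta} B$ as $s \searrow 0$. Combined with the monotonicity of $f/g$, for every $\epsilon > 0$ there is $\delta > 0$ such that $(L - \epsilon)\,g(s) \leq f(s) \leq L\,g(s)$ on $(0, \delta)$; integrating over $(0, t)$ with $t < \delta$ produces the same sandwich for $\int_0^t f/\int_0^t g$, so the volume ratio too converges to $L$. The asserted inequality $\vol B^+_A(t) \leq L \cdot \vol_{\kappa,\beta} B^+_B(t)$ is then an immediate consequence of monotonicity together with this boundary value.

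The one genuine technicality is measurability: for a general (possibly non-measurable) $A \subset \Sigma$, the sets $\Scal^+_A(s)$ and $\mathcal{B}^+_A(t)$ need not be Lebesgue-measurable, and areas and volumes must be interpreted as inner measures in the sense of Remark~\ref{remark:measurability}. I would deal with this exactly as in the proof of Theorem~\ref{theorem:lorentzian-area-comparison-to-hypersurface}, by first establishing each step for a compact exhaustion $K_i \subset A$ (for which $\Scal^+_{K_i}(s)$ is genuinely compact and the coarea formula applies verbatim) and then taking the supremum over $K_i$. The geometric content is all contained in the area comparison; what remains is pure bookkeeping, and this is the only place where one has to be careful.
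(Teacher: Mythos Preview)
Your approach is essentially identical to the paper's: the coarea formula reduces the volumes to integrals of areas, and then Lemma~\ref{lemma:gromov-lemma} combined with the area comparison Theorem~\ref{theorem:lorentzian-area-comparison-to-hypersurface} gives monotonicity. There are two small points where you diverge. First, you assert that $f(s) = \area \Scal^+_A(s)$ is continuous, which is not obvious for general $A$; the paper instead observes that if $\vol B^+_A(t_2) < \infty$ then the coarea identity itself forces $s \mapsto \area \Scal^+_A(s)$ to be locally integrable on $[0, t_2]$, which is all Lemma~\ref{lemma:gromov-lemma} requires. Second, and relatedly, you do not treat the possibility $\vol B^+_A(t) = \infty$; the paper handles this by a brief case distinction, noting that if the volume is infinite at some $t_2$ then some $\area \Scal^+_A(\tau_0)$ must be infinite, whence by area monotonicity all earlier areas are infinite too and $\vol B^+_A(t_1) = \infty$ for every $t_1 < t_2$, so the claim holds trivially. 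Your sandwich argument for the limit as $t \searrow 0$ is a perfectly good alternative to the paper's use of l'H\^opital.
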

\begin{proof}
  By the coarea formula~\eqref{eq:coarea-formula}, for any
  $t \in [0, b_{\kappa, \beta})$, we have
  \bel{star2}
  \vol B^+_A(t)
  = \int_0^t \area \Scal^+_A(\tau) \D \tau \,.
  \ee
  Let $0 < t_1 < t_2 < b_{\kappa, \beta}$ be given. We distinguish
  two cases. First, assume that $\vol B^+_A(t_2) = \infty$. Then, by~\eqref{star2}, there exists $\tau_0 \in (0, t)$ with $\area
  \Scal^+_A(\tau_0) = \infty$. By area comparison, we
  therefore must have $\area \Scal^+_A(\tau) = \infty$ also
  for all $\tau < \tau_0$. By~\eqref{star2} again, it follows that also
  $\vol B^+_A(t_1) = \infty$, hence the assertion is trivially
  satisfied.

  Now assume that $\vol B^+_A(t_2) < \infty$. Then by~\eqref{star2}, the
  function $\tau \mapsto \area \Scal^+_A(\tau)$ is locally
  integrable on $[0, t_2]$. Since also $\tau \mapsto
  \area_{\kappa, \beta} S^+_{B}(t)$ is locally integrable, we may
  apply Lemma~\ref{lemma:gromov-lemma} together with
  the area comparison
  Theorem~\ref{theorem:lorentzian-area-comparison-to-hypersurface}. This
  yields the monotonicity assertion.

Finally, as $t \searrow 0$, we use~\eqref{star2} and l'H\^{o}pital's rule to obtain
  \begin{align*}
    \lim_{t \searrow 0}
    \frac{\vol B^+_A(t)}
    {\vol_{\kappa, \beta} B^+_{B}(t)}
    = \lim_{t \searrow 0}
    \frac{\area \Scal^+_A(t)}
    {\area_{\kappa, \beta} S^+_{B}(t)}
    = \frac{\area A}{\area_{\kappa, \beta} B} \,.
  \end{align*}
\end{proof}

Again, we state a special case of this result.

\begin{corollary}
Let $\kappa, \beta \in \R$ and $(M, \Sigma)$ satisfy the $\CCC(\kappa, \beta)$.
Let $A \subseteq \Sigma$ and $B \subseteq \Sigma_{\kappa, \beta}$ be such that $\area_{\kappa, \beta} B = \area A$.
Then, the function
\[
t \mapsto \frac{\vol B^+_A(t)} {\vol_{\kappa, \beta} B^+_{B}(t)},
\qquad t \in [0, b_{\kappa, \beta})
\]
is nonincreasing, and converges to $1$ as $t \searrow 0$. Hence,
\[
\vol B^+_A(t) \leq \vol_{\kappa, \beta} B^+_{B}(t), \qquad t \in [0, b_{\kappa, \beta}).
\]
\end{corollary}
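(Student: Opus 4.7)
The corollary is an immediate specialization of Theorem~\ref{theorem:lorentzian-volume-comparison-to-hypersurface}, and my plan is simply to invoke that theorem with the given area normalization.

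First I would apply Theorem~\ref{theorem:lorentzian-volume-comparison-to-hypersurface} to the pair $(A,B)$, which gives both the monotonicity of the ratio
\[
t \mapsto \frac{\vol B^+_A(t)}{\vol_{\kappa, \beta} B^+_B(t)}, \qquad t \in [0, b_{\kappa, \beta}),
\]
and the identification of its limit as $t \searrow 0$ with $\area A / \area_{\kappa, \beta} B$. Under the hypothesis $\area_{\kappa, \beta} B = \area A$, this boundary value is simply $1$, so the monotonicity and the limit statements in the corollary follow directly.

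Next, I would observe that a nonincreasing function on $[0, b_{\kappa, \beta})$ whose limit at $0$ equals $1$ is bounded above by $1$ on its entire domain. This yields the inequality
\[
\vol B^+_A(t) \leq \vol_{\kappa, \beta} B^+_B(t)
\]
for all $t \in [0, b_{\kappa, \beta})$.

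There is no genuine obstacle here: all the analytic content, including the handling of the measurability subtlety from Remark~\ref{remark:measurability}, the coarea formula reduction, and the use of Lemma~\ref{lemma:gromov-lemma} to pass from the area to the volume ratio, has already been carried out in the proof of Theorem~\ref{theorem:lorentzian-volume-comparison-to-hypersurface}. The only step specific to this corollary is recognizing that the assumption $\area A = \area_{\kappa, \beta} B$ turns the normalization constant $\area A / \area_{\kappa, \beta} B$ into $1$, so that the general inequality in Theorem~\ref{theorem:lorentzian-volume-comparison-to-hypersurface} collapses to the cleaner comparison stated here.
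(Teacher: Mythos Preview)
Your proposal is correct and matches the paper's approach exactly: the paper presents this corollary as a direct special case of Theorem~\ref{theorem:lorentzian-volume-comparison-to-hypersurface} without supplying a separate proof, and your argument simply spells out that specialization under the normalization $\area A = \area_{\kappa,\beta} B$.
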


\section{Application to Singularity Theorems}
\label{sec:singularity-theorems}

We now use the comparison results of the previous section to prove
the following singularity theorem due to Hawking~\cite[pp.~272]{HE}.

\begin{theorem}
Let $M$ be globally hyperbolic and $\Sigma \subset M$ a smooth, spacelike, acausal, FCC hypersurface.%
\footnote{For example, $\Sigma \subset M$ could be a smooth, spacelike Cauchy hypersurface.}
Assume that $M$ and $\Sigma$ satisfy the $\CCC(\kappa, \beta)$ with $\kappa = 0$ and $\beta < 0$.
Then no future-directed curve starting in $\Sigma$
can have arc-length greater than $1/|\beta|$.
In particular, $M$ is timelike geodesically incomplete.
\end{theorem}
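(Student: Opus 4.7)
The plan is to use the d'Alembertian comparison (Theorem~\ref{theorem:dAlembertian-comparison}) to bound the signed time-separation $\tau_\Sigma$ everywhere on $J^+(\Sigma)$, and then translate the bound into the stated bound on arc-length. With $\kappa = 0$ and $\beta < 0$, one computes directly from Section~\ref{sec:comp-geom} that $b_{0, \beta} = 1/|\beta|$: the relevant comparison model has a linear warping function that reaches zero exactly at $t = 1/|\beta|$, in agreement with the blow-up time of the scalar solution of $s' + s^2 = 0$ with initial condition determined by $\beta$. Thus Theorem~\ref{theorem:dAlembertian-comparison} yields $\tau_\Sigma(q) < 1/|\beta|$ for every $q \in \Ical^+(\Sigma)$.

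The key step is to promote this bound from $\Ical^+(\Sigma)$ to the whole of $J^+(\Sigma)$. Given $q \in J^+(\Sigma) \setminus \Sigma$, global hyperbolicity together with the FCC hypothesis produces a unit-speed maximizing causal geodesic $\sigma \colon [0, \tau_\Sigma(q)] \to M$ from some $p \in \Sigma$ to $q$ that is normal to $\Sigma$ at $p$ and has no focal points before $q$. For each $t \in (0, \tau_\Sigma(q))$ the restriction $\sigma|_{[0, t]}$ is itself a maximizing unit-speed geodesic from $\Sigma$ to $\sigma(t)$ with no focal points; moreover it is the unique maximizing causal curve from $\Sigma$ to $\sigma(t)$, since any second maximizer concatenated with $\sigma|_{[t, \tau_\Sigma(q)]}$ would produce a broken maximizing causal curve from $\Sigma$ to $q$, contradicting the fact that maximizing causal curves are smooth geodesics. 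Hence $\sigma(t) \in \Ical^+(\Sigma)$ by Theorem~\ref{theorem:largest-normal-neighborhood}, so $t = \tau_\Sigma(\sigma(t)) < 1/|\beta|$. Letting $t \nearrow \tau_\Sigma(q)$ gives $\tau_\Sigma(q) \leq 1/|\beta|$.

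The desired arc-length bound now follows: if $\gamma$ were any future-directed piecewise smooth causal curve from a point of $\Sigma$ to some $q \in J^+(\Sigma)$ with $L(\gamma) > 1/|\beta|$, then by definition $\tau_\Sigma(q) \geq L(\gamma) > 1/|\beta|$, contradicting the preceding bound. Timelike geodesic incompleteness is then an immediate consequence: take any $p \in \Sigma$ and let $\gamma_p$ be the future-directed unit-speed timelike geodesic normal to $\Sigma$ at $p$ (which exists since $\Sigma$ is spacelike). The Lorentzian arc-length of $\gamma_p$ equals its affine parameter, so $\gamma_p$ cannot be future-complete without violating the arc-length bound.

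The main obstacle, as indicated above, is the propagation from $\Ical^+(\Sigma)$ to all of $J^+(\Sigma)$. Although $\Cut^+(\Sigma)$ is closed and of measure zero by Theorem~\ref{theorem:largest-normal-neighborhood}, one still needs to guarantee that any point $q$ lying on or beyond the cut locus can be approached through $\Ical^+(\Sigma)$ in a way that preserves the strict upper bound on $\tau_\Sigma$. The existence of a maximizing timelike geodesic with no interior focal points, supplied by the FCC hypothesis, is precisely what makes this propagation work, via the broken-geodesic uniqueness argument sketched above; without this structural input, one could only conclude a nonstrict bound on the open dense set $\Ical^+(\Sigma)$, which would not suffice to rule out long curves ending on the cut locus.
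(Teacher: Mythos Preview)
Your proof is correct and follows the same d'Alembertian-comparison route as the paper's first proof; you invoke the bound $\tau_\Sigma < b_{0,\beta} = 1/|\beta|$ from Theorem~\ref{theorem:dAlembertian-comparison} directly, whereas the paper equivalently uses the blow-up of $H_{0,\beta}(t) = 1/(t-1/|\beta|)$, and your broken-geodesic argument supplies justification for the step ``$\gamma(t) \in \Ical^+(\Sigma)$ for $t\in(0,b)$'' that the paper simply asserts. The paper also gives a second, independent proof via the area comparison theorem.
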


\begin{proof}[Proof via d'Alembertian Comparison]
This proof is based on the proof of Myers's theorem in Riemannian geometry given in~\cite{zhu:1997a}.

Let $\gamma \colon [0, b] \to M$ be a maximizing, timelike,
future-directed, unit-speed geodesic emanating perpendicular from
$\Sigma$. Then we have $\gamma(t) \in \Ical^+(\Sigma)$ for all
$t \in (0, b)$. For $\kappa = 0$ and $\beta < 0$, the d'Alembertian
comparison Theorem~\ref{theorem:dAlembertian-comparison} yields
\[
- (\dAlembert \tau_\Sigma)(\gamma(t))
\leq \H_{0, \beta}(t)
= \frac{1}{t+1/\beta}
= \frac{1}{t-1/|\beta|} \,,
\]
for all $t \in (0, b)$.
Since the right hand side diverges to $-\infty$ for $t \nearrow 1/|\beta|$,
but the left hand side is finite for all $t \in (0, b)$,
this implies that $b \leq 1/|\beta|$.
Thus, since every point in $I^+(\Sigma)$ can be connected to $\Sigma$ by a
maximizing geodesic, we have $\tau_\Sigma(q) \leq 1/|\beta|$ for all
$q \in I^+(\Sigma)$. From the definition of the time-separation, this gives
the required upper bound on arc-length of future-directed curves starting in
$\Sigma$. Timelike geodesic incompleteness follows immediately.
\end{proof}

\begin{proof}[Proof via Area Comparison]

We will show that $S^+_\Sigma(1/|\beta|) \subset
\Cut^+(\Sigma)$, which implies that $S^+_\Sigma(t) = \emptyset$
for all $t > |\beta|$. This again yields $\tau_\Sigma(q) \leq
1/|\beta|$ for all $q \in I^+(\Sigma)$, so we can proceed as in the
previous proof.

For the sake of contradiction, we assume that there exists
$q \in S^+_\Sigma(1/|\beta|) \setminus \Cut^+(\Sigma) = \Scal^+_\Sigma(1/|\beta|)$.
Since the cut locus is closed, there exists a neighborhood
$K \subset \Scal^+_\Sigma(1/|\beta|)$ of $q$ with the property that $\area K > 0$.
Set $A := \Phi_{-1/|\beta|}(K) \subseteq \Sigma$, where $\Phi$ is
the flow of $\n = - \gradient \tau_\Sigma$, and choose any subset
$B \subseteq \Sigma_{0, \beta}$.
Then, by the area comparison theorem~\ref{theorem:lorentzian-area-comparison-to-hypersurface},
we obtain
\[
\area S^+_A(t)
\leq \frac{\area A}{\area_{0, \beta} B}
\cdot \area_{0, \beta} S^+_{B}(t)
\sim \big( 1 - |\beta|t \big)
\]
for all $t \in (0, 1/|\beta|)$. It follows that $\area S^+_A(t_0) = 0$
for some $t_0 \le 1/|\beta|$, and therefore $\area K \leq \area
S^+_A(1/|\beta|) = 0$. This contradicts the choice of $K$.
\end{proof}

\section{Final remarks}

Our results should be compared with corresponding results in
Riemannian geometry. In particular, our proof of the
singularity theorem is largely analogous to the proof of
Myers's Theorem, which states that a complete Riemannian
manifold (of dimension $n$) that satisfies the lower Ricci
curvature bound $\Ric \geq (n-1) \kappa \g$ for some constant
$\kappa > 0$, is necessarily compact, with diameter less than or equal to $\pi/\sqrt{\kappa}$.%
\footnote{In particular, adapting our techniques to develop
  comparison results for a point instead of a hypersurface, and
  assuming the stronger curvature bound $\Ric \geq \kappa > 0$, one
  can obtain results on existence of conjugate points that are more
  closely related to Myers's theorem.} In a more speculative
direction, the area-theoretic approach to the singularity theorems
given above may be applicable in more general situations where the
Lorentzian metric is of low regularity. For metrics that are not
$C^2$, one may \emph{define\/} a Ricci curvature bound in terms of
monotonicity properties of area functionals along geodesics (see,
e.g.,~\cite{Ohta}). Whether one can then develop suitable
synthetic-geometrical techniques in Lorentzian geometry to prove
singularity theorems for low-regularity metrics is currently under
investigation~\cite{SynthSing}.

\appendix
\section{Curvature bounded above}
\label{sec:bddabove}

Let $(M, \g)$ be a Lorentzian manifold and $\Sigma$ a smooth
spacelike hypersurface in $M$. We finally investigate some conditions
under which we may prove a \emph{lower\/} bound on the time
separation between $\Sigma$ and focal points along normal geodesics.

\begin{proposition}
\label{Rauch:submanifold}
Let $\gamma$ be a normal geodesic to $\Sigma$ parametrized by arc-length.
Let $\kappa, \beta \in \R$ be constants such that the curvature operator
$R_{\gammad} := \Rm(\cdot, {\gammad}) {\gammad}$ satisfies $R_{\gammad} \leq \kappa \id$,
and the shape operator of $\Sigma$ satisfies $S_\Sigma \geq \beta \id$.
Define the positive constant $t_0 = t_0(\kappa, \beta)$
to be the first positive value of $t$ for which the following equations hold:
\begin{subequations}
\begin{align}
\cot \left( \sqrt{\kappa} t \right) &= - \frac{\beta}{\sqrt{\kappa}} &\kappa&>0,
\\
t &= - \frac{1}{\beta} &\kappa&=0,
\\
\coth \left( \sqrt{|\kappa|} t \right) &= - \frac{\beta}{\sqrt{|\kappa|}} &\kappa&<0.
\end{align}\label{zeroes}\end{subequations}
(If there are no solutions for positive $t$, set $t_0 = +\infty$.)
Then no point $\gamma(t)$ along the geodesic $\gamma$ is a
focal point of $\Sigma$ for $0 < t < t_0$.
\end{proposition}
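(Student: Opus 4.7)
My plan is to translate the focal-point statement into a blow-up statement for a matrix-valued Riccati equation along $\gamma$, and then invoke Theorem~\ref{theorem:riccati-comparison} in the direction opposite to the one used in Section~\ref{sec:RicciComparison}: here the curvature upper bound and the shape-operator lower bound will \emph{gain} existence time for the Riccati solution, rather than lose it.

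First I would set up the Jacobi tensor. Let $p = \gamma(0)$. Since $\Sigma$ is spacelike and $\gamma$ is arc-length parametrized and normal to $\Sigma$, the tangent $\dot\gamma$ is timelike and $\dot\gamma(t)^\perp \subset T_{\gamma(t)} M$ inherits a positive-definite inner product. I choose a parallel orthonormal frame for $\dot\gamma^\perp$ along $\gamma$ and use it to identify every fiber with $(\R^n, \delta)$. In this trivialization the Jacobi operator $R_{\dot\gamma}$ becomes a smooth curve $\mathcal{R} \colon [0, b] \to \mathsf{S}(\R^n)$ satisfying $\mathcal{R}(t) \leq \kappa \, \id$. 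Let $A \colon [0, b] \to \End(\R^n)$ be the Jacobi tensor with $A(0) = \id$ and $A'(0) = \mathcal{S}_\Sigma$, the matrix of the shape operator. Then $A'' + \mathcal{R} A = 0$, and $\gamma(t_*)$ is a focal point of $\Sigma$ along $\gamma$ exactly when $A(t_*)$ is singular.

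Next I would pass to the associated Riccati flow. Let $t_* \in (0, b]$ denote the first time $A$ is singular, and on $[0, t_*)$ set $\mathcal{U} := A' A^{-1}$. A direct computation gives $\mathcal{U}' + \mathcal{U}^2 + \mathcal{R} = 0$ with $\mathcal{U}(0) = \mathcal{S}_\Sigma$. Using that $\mathcal{R}$ and $\mathcal{S}_\Sigma$ are self-adjoint, a Wronskian-type argument shows that $A^\dagger A' - (A')^\dagger A$ is constantly zero, so $\mathcal{U}$ is self-adjoint wherever defined. Furthermore $\mathcal{U}$ cannot be extended to $t_*$: picking $v \in \ker A(t_*) \setminus \{0\}$, one has $A(t) v \to 0$ while $\mathcal{U}(t) A(t) v = A'(t) v \to A'(t_*) v$, which is non-zero (otherwise $A(\cdot)v$ would be a Jacobi field vanishing together with its derivative at $t_*$, hence identically zero), so some direction of $\mathcal{U}(t)$ must diverge as $t \nearrow t_*$.

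Finally I would run the comparison. Let $s \colon (0, t_0) \to \R$ be the maximal solution of the scalar Riccati equation $s' + s^2 + \kappa = 0$ with $s(0) = \beta$; solving it explicitly in each of the three cases for $\kappa$ recovers precisely the transcendental conditions~\eqref{zeroes} for the blow-up time $t_0$. I then apply Theorem~\ref{theorem:riccati-comparison} with $E = \R^n$, $R_1 = \kappa \, \id$, $R_2 = \mathcal{R}$, $S_1 = s \cdot \id$, $S_2 = \mathcal{U}$: the hypothesis $R_1 \geq R_2$ is the curvature bound, and $S_2(0) - S_1(0) = \mathcal{S}_\Sigma - \beta \, \id \geq 0$ is the shape-operator bound. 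The theorem yields $t_1 \leq t_2$, which here reads $t_0 \leq t_*$, proving that no focal point appears before $t_0$. The main obstacle is the soft step of equating the first focal time with the blow-up time of $\mathcal{U}$ --- resting on self-adjointness via the Wronskian identity and on the fact that singularity of $A$ forces at least one eigenvalue of $\mathcal{U}$ to diverge --- while trivializing $\dot\gamma^\perp$, solving the scalar model, and matching signs in Theorem~\ref{theorem:riccati-comparison} are routine.
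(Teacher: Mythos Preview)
Your argument is correct and follows essentially the same route as the paper. The paper works directly with the shape operator $\Scal = \gamma^*S$ of the future spheres (inheriting the Riccati equation from Theorem~\ref{theorem:shape-operator-riccati-equation}), while you build the same object as $\mathcal{U} = A'A^{-1}$ from the $\Sigma$-Jacobi tensor; these coincide, and both proofs then invoke Theorem~\ref{theorem:riccati-comparison} with $R_1 = \kappa\,\id \geq \mathcal{R} = R_2$ and $S_2(0) - S_1(0) = \mathcal{S}_\Sigma - \beta\,\id \geq 0$. The only substantive difference is that the paper reads off the pointwise bound $\Scal(t) \geq s_{\kappa,\beta}(t)\,\id$ and then appeals to the characterization of focal points as those $t$ where $\tr\Scal(t) \searrow -\infty$, whereas you use the existence-time conclusion $t_1 \leq t_2$ of Theorem~\ref{theorem:riccati-comparison} directly and supply the Wronskian/blow-up argument that the paper leaves implicit; your version is slightly more self-contained in that it does not need $\tau_\Sigma$ to be smooth in a neighbourhood of $\gamma$.
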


\begin{proof}
    We proceed in a similar way to the proof of Theorem~\ref{theorem:dAlembertian-comparison}. 
    First, let
    $s_{\kappa,\beta} \colon [0,b_{\kappa,\beta}) \to \R$ denote the
    maximal solution of $s_{\kappa,\beta}' + s_{\kappa,\beta}^2 +
    \kappa = 0$ with $s_{\kappa,\beta}(0) = \beta$. We then have
    $s_{\kappa,\beta} = 1/n \cdot H = f'/f$, where the functions $H$
    and $f$ may be found in Table~\ref{tab:warping-functions-for-einstein-metrics} 
    (with the constants chosen appropriately). One may check that $t_0$ as
    defined by~\eqref{zeroes} corresponds precisely to the first
    positive zero of $f$, and hence coincides with $b_{\gamma,\beta}$.

    Next, choose an orthonormal frame $e_1, \ldots, e_n \in
    \Secinfty(\gamma^\bot)$ for the normal bundle of $\gamma$, and
    let $e^1, \ldots, e^n \in \Secinfty((\gamma^\bot)^*)$ be the dual
    coframe. As in the proof of
    Theorem~\ref{theorem:dAlembertian-comparison}, we note that
    $R_{\gammad}$ and the shape operator of the future spheres, $S(X)
    = - \nabla_X \gradient \tau_\Sigma$, only take values
    perpendicular to $\gamma$. Therefore, we have $R_\gamma =
    \mathcal{R}^i_j e_i \otimes e^j$ and $\gamma^*S = \Scal^i_j e_i
    \otimes e^j$, where the smooth maps $\mathcal{R} =
    (\mathcal{R}^i_j), \Scal = (\Scal^i_j) \colon [0, T]
    \to \End(\R^n)$ are self-adjoint w.r.t the Euclidean inner
    product and satisfy the Riccati equation $\Scal' + \Scal^2 +
    \mathcal{R} = 0$.

    Our assumptions are equivalent to $\mathcal{R} \leq \kappa \id$
    and $\Scal(0) \geq \beta \id$. Therefore, a direct application of
    Theorem~\ref{theorem:riccati-comparison} implies that
    \bel{shapelowerbound} \Scal(t) \geq s_{\kappa,\beta}(t) \cdot \id
    \,. \ee Since focal points of $\Sigma$ along $\gamma$ correspond
    precisely to points at which $\Scal$ becoming singular in the
    sense that $\tr \Scal(t) \searrow - \infty$, it follows
    from~\eqref{shapelowerbound} and the observation at the beginning
    of the proof that this situation cannot occur before $t_0$ as
    defined by~\eqref{zeroes}. This finishes the proof.
\end{proof}

\begin{remarks}
{\ }\\[-4mm]
\begin{compactenum}
\item The conditions of Proposition~\ref{Rauch:submanifold} may
    alternatively be stated as saying that
    \bel{LorentzianCurvatureBound} \langle \Rm(X, \dot{\gamma})
    \dot{\gamma}, X \rangle \le \kappa \left( \langle \gammad,
        \gammad \rangle \langle X, X \rangle - \langle \gammad, X
        \rangle^2 \right) \ee along $\gamma$, for all vector fields
    $X$ defined along $\gamma$, and that the eigenvalues of the
    second fundamental form of $\Sigma$ at $p = \gamma(0)$ are
    bounded below by $\beta$. In this form,
    Proposition~\ref{Rauch:submanifold} is essentially an adaption to
    Lorentzian geometry of the Rauch comparison theorem for
    submanifolds of Riemannian manifolds~\cite{Warner:Rauch}.
\item The estimates in Proposition~\ref{Rauch:submanifold} are
    \emph{sharp}, with equality being achieved for hypersurfaces with
    all eigenvalues of the shape operator equal to $\beta$ in the
    two-dimensional, model Lorentzian manifold of constant curvature
    $\kappa$.
\item Note that the constant $t_0$ is independent of the dimension of
    the manifold $M$.
\end{compactenum}
\end{remarks}

Applying Proposition~\ref{Rauch:submanifold} along all
geodesics normal to $\Sigma$, we have the following result.

\begin{theorem}
\label{thm:noconjpts} Let $\Sigma \subset M$ be a spacelike
hypersurface. Let $\kappa, \beta \in \R$ be given constants.
Assume that, for any future-directed geodesic normal to
$\Sigma$, $\gamma \colon [0, T] \to M$, normalised such that
$\SP{\gammad, \gammad} = -1$, the curvature operator
$R_{\gammad}$ satisfies $R_{\gammad} \leq \kappa \id$. Assume
further that the shape operator of $\Sigma$ satisfies $S_\Sigma
\geq \beta \id$. Then no point along $\gamma$ is a focal point
of $\Sigma$ if $T < t_0$, where $t_0$ is as in~\eqref{zeroes}.
\end{theorem}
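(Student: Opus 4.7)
The plan is to observe that Theorem~\ref{thm:noconjpts} is essentially a direct, uniform application of Proposition~\ref{Rauch:submanifold} to every future-directed geodesic normal to $\Sigma$. The key observation is that the critical parameter $t_0$ defined by~\eqref{zeroes} depends only on the constants $\kappa$ and $\beta$, not on the particular choice of normal geodesic $\gamma$ or on the dimension of $M$ (as noted in the remarks following the proposition). Hence, once the proposition is in hand, the theorem follows by a single, almost bookkeeping, step.

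In more detail, I would fix an arbitrary future-directed geodesic $\gamma \colon [0, T] \to M$ that emanates orthogonally from $\Sigma$ and is normalised so that $\SP{\gammad, \gammad} = -1$. This is precisely the arc-length parametrisation required by Proposition~\ref{Rauch:submanifold}. The curvature hypothesis $R_{\gammad} \leq \kappa \cdot \id$ along $\gamma$ and the shape-operator bound $S_\Sigma \geq \beta \cdot \id$ at the initial point $\gamma(0) \in \Sigma$ are exactly the remaining assumptions of Proposition~\ref{Rauch:submanifold}. Consequently, applying the proposition yields that no point $\gamma(t)$ with $t \in (0, t_0)$ is a focal point of $\Sigma$ along $\gamma$. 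Since by assumption $T < t_0$, it follows that no point of $\gamma$ whatsoever is a focal point of $\Sigma$.

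Because $\gamma$ was an arbitrary normal geodesic satisfying the hypotheses, and since $t_0$ is the same constant for all such $\gamma$, the conclusion of Theorem~\ref{thm:noconjpts} is immediate. The only genuine content here is recognising that Proposition~\ref{Rauch:submanifold} was already proved in a form that is uniform in $\gamma$, so no extra compactness argument, global analytic estimate, or dimension-dependent refinement is needed; the dimension-independence of $t_0$ built into the Riccati comparison of Theorem~\ref{theorem:riccati-comparison} does all of the work. I therefore do not expect any serious obstacle in writing out this proof.
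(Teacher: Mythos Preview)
Your proposal is correct and matches the paper's approach exactly: the paper does not even write out a separate proof, but simply states that the theorem follows by ``applying Proposition~\ref{Rauch:submanifold} along all geodesics normal to $\Sigma$.'' Your observation that $t_0$ depends only on $\kappa$ and $\beta$, and hence is uniform over all such geodesics, is precisely the (only) point needed.
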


We now note that the proof of Proposition~\ref{zeroes} yields the following result.

\begin{proposition}
    Let $\kappa, \beta \in \R$ and that $(M, \g)$ satisfy the
    conditions of Theorem~\ref{thm:noconjpts}. Then, for $t > 0$
    sufficiently small such that $S^+_{\Sigma}(t) \cap \Cut^+(\Sigma)
    = \emptyset$, the mean curvature $H_t$ of $S^+_{\Sigma}(t)$
    satisfies \bel{Hbound} \H_t \ge \H_{\kappa,
      \beta}(\tau_\Sigma(q)) \,, \ee where $\H_{\kappa, \beta}$ are
    the functions given in
    Table~\ref{tab:warping-functions-for-einstein-metrics}.
\end{proposition}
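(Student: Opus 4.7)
The plan is to obtain the mean curvature inequality by taking the trace of the pointwise shape operator lower bound~\eqref{shapelowerbound} already established in the proof of Proposition~\ref{Rauch:submanifold}. Concretely, I would fix $q \in \Scal^+_\Sigma(t) = S^+_\Sigma(t) \setminus \Cut^+(\Sigma)$ and let $\gamma \colon [0,t] \to M$ be the unique future-directed, unit-speed normal geodesic from $\Sigma$ to $q$, whose existence is guaranteed by Theorem~\ref{theorem:largest-normal-neighborhood} together with Proposition~\ref{proposition:signed-time-separation}. A parallel orthonormal frame of the normal bundle $\gamma^\bot$ converts the shape operator of the level sets $\Scal^+_\Sigma(s)$ and the curvature operator $R_{\gammad}$ into self-adjoint matrix-valued maps $\Scal, \mathcal{R} \colon [0,t] \to \End(\R^n)$ satisfying the matrix Riccati equation $\Scal' + \Scal^2 + \mathcal{R} = 0$ with $\Scal(0) = S_\Sigma|_{\gamma(0)}$.

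Next, I would apply the Eschenburg--Heintze comparison theorem (Theorem~\ref{theorem:riccati-comparison}) with $R_1 := \kappa \cdot \id$ dominating $R_2 := \mathcal{R}$ (this is the hypothesis $R_{\gammad} \leq \kappa \cdot \id$) and with the scalar comparison solution $S_1 := s_{\kappa,\beta} \cdot \id$, where $s_{\kappa,\beta}$ solves $s' + s^2 + \kappa = 0$ with $s(0) = \beta$ on its maximal interval $[0, b_{\kappa,\beta})$. The initial-value hypothesis $U(0) = \Scal(0) - \beta \cdot \id \geq 0$ is exactly $S_\Sigma \geq \beta \cdot \id$, and since $q$ lies outside the cut locus the identification $b_{\kappa,\beta} = t_0$ noted at the beginning of the proof of Proposition~\ref{Rauch:submanifold} ensures that $s_{\kappa,\beta}$ stays finite on $[0,t]$. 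The theorem then yields precisely the matrix bound~\eqref{shapelowerbound}, namely $\Scal(s) \geq s_{\kappa,\beta}(s) \cdot \id$ for all $s \in [0,t]$.

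Taking the trace at $s = t$ and invoking $\H_{\kappa,\beta} = n \cdot s_{\kappa,\beta}$ from Table~\ref{tab:warping-functions-for-einstein-metrics} then yields
\[
\H_t(q) = \tr \Scal(t) \geq n \cdot s_{\kappa,\beta}(t) = \H_{\kappa,\beta}(\tau_\Sigma(q)),
\]
which is~\eqref{Hbound}; since $q$ was arbitrary, the bound holds on all of $\Scal^+_\Sigma(t)$. Structurally this is the exact mirror of Theorem~\ref{theorem:dAlembertian-comparison}, with upper/lower bounds swapped on both the curvature and the shape operator, and the matrix Riccati comparison applied directly rather than routed through its scalar corollary (Theorem~\ref{theorem:scalar-riccati-comparison}). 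I anticipate no serious obstacle; the only points meriting care are the orientation of the inequalities when invoking Theorem~\ref{theorem:riccati-comparison} (swapping $S_1$ and $S_2$ would flip the conclusion) and the verification that the comparison solution stays defined on $[0,t]$, both of which are immediate from the hypotheses.
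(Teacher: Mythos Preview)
Your proposal is correct and is exactly the paper's approach: the paper's proof is the single sentence ``Take the trace of~\eqref{shapelowerbound},'' and you have simply unwound what that entails by re-deriving~\eqref{shapelowerbound} from the proof of Proposition~\ref{Rauch:submanifold} and then tracing it. The only (minor) quibble is your phrasing that ``$q$ lies outside the cut locus \ldots\ ensures that $s_{\kappa,\beta}$ stays finite on $[0,t]$'': avoidance of the cut locus guarantees that $\Scal$ is defined, but finiteness of $s_{\kappa,\beta}$ on $[0,t]$ is simply the requirement $t < b_{\kappa,\beta}$, which is implicit in ``$t$ sufficiently small'' (and is needed anyway for $\H_{\kappa,\beta}(t)$ to make sense).
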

\begin{proof}
Take the trace of~\eqref{shapelowerbound}.
\end{proof}

Following through the proof of Theorem~\ref{theorem:lorentzian-area-comparison-to-hypersurface}, we have the following result.

\begin{theorem}
\label{thm:areaCBA}
Let $\kappa, \beta \in \R$ and assume that $(M, \g)$ and $\Sigma
\subset M$ satisfy the conditions of Theorem~\ref{thm:noconjpts}.
Then, for any $A \subseteq \Sigma$ and $B \subseteq \Sigma_{\kappa,
  \beta}$, and $t > 0$ sufficiently small that $\Scal^+_A(t) \cap
\Cut^+(\Sigma) = \emptyset$, the map
\[
t \to \frac{\area \Scal^+_A(t)}{\area_{\kappa, \beta} S^+_B(t)}
\]
is non-decreasing. Further, for $\tau \searrow 0$, this ratio
converges to $\area A / \area_{\kappa, \beta} B$, so we have
\[
\area \Scal^+_A(t) \ge \frac{\area A}{\area_{\kappa, \beta} B} \cdot \area_{\kappa, \beta} S^+_{B}(t)\, ,
\]
for all $t \in [0, b_{\kappa, \beta})$.
\end{theorem}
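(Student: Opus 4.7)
The plan is to mirror the proof of Theorem~\ref{theorem:lorentzian-area-comparison-to-hypersurface}, with two modifications dictated by the fact that the hypotheses have been reversed. First, the mean curvature inequality supplied by the preceding proposition is $\H_t \ge \H_{\kappa,\beta}$ rather than $\le$, so the logarithmic derivative inequality in the area computation flips. Second, and crucially, one now pushes the test sets \emph{forward} along the flow of $\n$ rather than backward, so that the inclusion $K_i(t) \subseteq \Scal^+_A(t)$ provides an inequality pointing in the direction needed to extract non-decreasing monotonicity.

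Concretely, I would fix $0 < t_1 < t_2$ in the range where $\Scal^+_A(t_2) \cap \Cut^+(\Sigma) = \emptyset$, pick compact sets $K_i \subset \Scal^+_A(t_1)$ with $\area K_i \nearrow \area \Scal^+_A(t_1)$ (inner-measure convention as in Remark~\ref{remark:measurability}), and set $K_i(t) := \Phi_{t-t_1}(K_i)$, where $\Phi$ is the flow of $\n = - \gradient \tau_\Sigma$. Since each point of $K_i$ lies on the unique maximizing normal geodesic from some $p \in A$, and these geodesics are precisely the integral curves of $\n$ (Proposition~\ref{proposition:signed-time-separation}), one gets $K_i(t) \subseteq \Scal^+_A(t)$ for all $t \in [t_1, t_2]$; this uses that the cut-locus-free condition at $t = t_2$ automatically propagates backward along each normal geodesic by the very definition of $s^+_\Sigma$. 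Combining the first variation of area~\eqref{eq:first-variation-of-area}, the lower mean-curvature bound \eqref{Hbound}, and the identity \eqref{eq:area-in-comparison-space} rewritten as $\tfrac{d}{dt}\log \area_{\kappa,\beta} S^+_B(t) = \H_{\kappa,\beta}(t)$, I obtain
\[
\frac{d}{dt} \log \area K_i(t) = \frac{1}{\area K_i(t)} \int_{K_i(t)} \H_t \, d\mu_t \ge \H_{\kappa,\beta}(t) = \frac{d}{dt} \log \area_{\kappa,\beta} S^+_B(t),
\]
whence $t \mapsto \area K_i(t)/\area_{\kappa,\beta} S^+_B(t)$ is non-decreasing on $[t_1, t_2]$. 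The chain of inequalities
\[
\frac{\area K_i}{\area_{\kappa,\beta} S^+_B(t_1)} = \frac{\area K_i(t_1)}{\area_{\kappa,\beta} S^+_B(t_1)} \le \frac{\area K_i(t_2)}{\area_{\kappa,\beta} S^+_B(t_2)} \le \frac{\area \Scal^+_A(t_2)}{\area_{\kappa,\beta} S^+_B(t_2)},
\]
together with $\area K_i \nearrow \area \Scal^+_A(t_1)$, yields the desired monotonicity. The limit at $t \searrow 0$ proceeds exactly as in the previous area-comparison result: $\area \Scal^+_A(t) \to \area A$ and $\area_{\kappa,\beta} S^+_B(t) \to \area_{\kappa,\beta} B$ by smoothness of the flow near $\Sigma$, so the ratio converges to $\area A / \area_{\kappa,\beta} B$. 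The pointwise lower bound on $\area \Scal^+_A(t)$ is then an immediate consequence of the monotonicity.

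The main obstacle I anticipate is book-keeping around the cut locus, since the lower mean-curvature bound of the preceding proposition is stated only "for $t$ sufficiently small," and one must verify that the entire forward trajectory $\{K_i(t) : t \in [t_1,t_2]\}$ remains in the cut-locus-free region $\Ical^+(\Sigma)$. The key observation here is that the bound \eqref{Hbound} in fact holds along any normal geodesic for every $t$ strictly less than $s^+_\Sigma(v)$, since it is a pointwise consequence of the Riccati comparison Theorem~\ref{theorem:riccati-comparison} applied to the shape operator along that geodesic; hence the hypothesis $\Scal^+_A(t_2) \cap \Cut^+(\Sigma) = \emptyset$ is exactly what is needed. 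Everything else — the flow calculation, the inner-measure exhaustion, and the small-$t$ limit — is a direct recycling of the earlier proof with inequalities reversed.
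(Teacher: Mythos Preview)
Your proposal is correct and is precisely what the paper intends by ``following through the proof of Theorem~\ref{theorem:lorentzian-area-comparison-to-hypersurface}''; the paper gives no further details. Your one substantive adaptation --- exhausting $\Scal^+_A(t_1)$ by compacta and pushing \emph{forward} to $t_2$, so that the inclusion $K_i(t_2)\subseteq\Scal^+_A(t_2)$ points the right way for the non-decreasing conclusion --- is exactly the adjustment the reversed mean-curvature inequality~\eqref{Hbound} forces, and your handling of the cut-locus bookkeeping is accurate.
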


Finally, Lemma~\ref{lemma:gromov-lemma} has no analogue for non-decreasing functions. Therefore, as is standard, there is no relative volume monotonicity theorem in the case of curvature bounded above. Theorem~\ref{thm:areaCBA} and the coarea formula, however, yield the following volume comparison result.

\begin{theorem}
Let $\kappa, \beta \in \R$ and assume that $(M, \g)$ and $\Sigma \subset M$ satisfy the conditions of Theorem~\ref{thm:noconjpts}. Then, for any $A \subseteq \Sigma$ and $B \subseteq \Sigma_{\kappa, \beta}$, and $t > 0$ sufficiently small that $\Scal^+_A(t) \cap \Cut^+(\Sigma) = \emptyset$, we have
\[
\vol B^+_A(t) \ge \frac{\area A}{\area_{\kappa, \beta} B} \cdot \vol_{\kappa, \beta} B^+_{B}(t)
\]
for all $t \in [0, b_{\kappa, \beta})$.
\end{theorem}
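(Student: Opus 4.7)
The plan is to reduce the volume comparison to the area comparison of Theorem~\ref{thm:areaCBA} via the coarea formula, in direct parallel with the proof of Theorem~\ref{theorem:lorentzian-volume-comparison-to-hypersurface}, but with the inequality reversed and without invoking any relative monotonicity statement.

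The first step is to propagate the cut-locus hypothesis backward in time. If a geodesic segment $\gamma_v|_{[0,t]}$ normal to $\Sigma$ reaches $\Scal^+_A(t)$ without meeting $\Cut^+(\Sigma)$, then by the very definition of the cut function $s^+_\Sigma$ it satisfies $\tau < s^+_\Sigma(v)$ for every $\tau \in [0,t)$, so $\gamma_v(\tau) \notin \Cut^+(\Sigma)$ at every earlier parameter. Consequently $\Scal^+_A(\tau) \cap \Cut^+(\Sigma) = \emptyset$ for every $\tau \in [0,t]$, and Theorem~\ref{thm:areaCBA} supplies the pointwise bound
\[
\area \Scal^+_A(\tau) \ge \frac{\area A}{\area_{\kappa, \beta} B} \cdot \area_{\kappa, \beta} S^+_B(\tau), \qquad \tau \in [0,t].
\]

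Next, I would apply the coarea formula (Proposition~\ref{proposition:coarea-formula}) to the indicator function of $B^+_A(t) \subset \Ical^+(\Sigma)$, using that $\tau_\Sigma$ is a smooth timelike distance function on $\Ical^+(\Sigma)$ whose level sets are the restricted future spheres. This gives
\[
\vol B^+_A(t) = \int_0^t \area \Scal^+_A(\tau) \, d\tau,
\]
with the analogous identity in the model geometry $(M^{n+1}_{\kappa,\beta}, \Sigma_{\kappa,\beta})$. Integrating the pointwise area inequality on $[0,t]$ against $d\tau$ then produces the stated volume bound.

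The main obstacle is bookkeeping rather than anything conceptually new: one must verify that $\tau \mapsto \area \Scal^+_A(\tau)$, interpreted via inner measure as in Remark~\ref{remark:measurability}, is measurable (or at least lower semi-continuous, which is enough for integration), and confirm that the uniform-in-$\tau$ application of Theorem~\ref{thm:areaCBA} is legitimate under the standing hypothesis. One cannot shortcut through a ratio-monotonicity argument as in Theorem~\ref{theorem:lorentzian-volume-comparison-to-hypersurface}, since Lemma~\ref{lemma:gromov-lemma} has no counterpart for non-decreasing ratios; this is precisely the point flagged in the sentence preceding the statement, and it forces the argument to proceed by direct integration of the area bound rather than by a Bishop--Gromov-style volume monotonicity.
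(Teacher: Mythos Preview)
Your proposal is correct and follows precisely the approach the paper indicates: the paper's ``proof'' consists only of the remark immediately preceding the statement, namely that the result follows from Theorem~\ref{thm:areaCBA} together with the coarea formula, since Lemma~\ref{lemma:gromov-lemma} is unavailable for non-decreasing ratios. You have supplied the details (backward propagation of the cut-locus hypothesis, then direct integration of the area bound), which the paper leaves implicit.
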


\end{document}